\def\L2{{\cal L}_2}
\def\L2e{{\cal L}_{2e}}
\def\rea{\mathbb{R}}
\def\int{\mathbb{Z}}
\def\diag{\mbox{diag}}
\def\col{\mbox{col}}
\def\begequarr{\begin{eqnarray}}
\def\endequarr{\end{eqnarray}}
\def\begequarrs{\begin{eqnarray*}}
\def\endequarrs{\end{eqnarray*}}
\def\begarr{\begin{array}}
\def\endarr{\end{array}}
\def\begequ{\begin{equation}}
\def\endequ{\end{equation}}
\def\lab{\label}
\def\begdes{\begin{description}}
\def\enddes{\end{description}}
\def\begenu{\begin{enumerate}}
\def\begite{\begin{itemize}}
\def\endite{\end{itemize}}
\def\endenu{\end{enumerate}}
\def\lef[{\left[\begin{array}}
\def\rig]{\end{array}\right]}
\def\begcen{\begin{center}}
\def\endcen{\end{center}}
\def\begrem{\begin{remark}\rm}
\def\endrem{\end{remark}}
\newtheorem{theorem}{Theorem}[section]
\newtheorem{lemma}[theorem]{Lemma}
\newtheorem{proposition}[theorem]{Proposition}
\newtheorem{remark}[theorem]{Remark}
\newtheorem{assumption}[theorem]{Assumption}
\numberwithin{equation}{section}
\begin{document}
\title{\mbox{Modeling and Control of HVDC Transmission Systems}\\{From Theory to Practice and Back}}

\author[lss]{Daniele Zonetti\corref{cor1}}
\ead{daniele.zonetti@lss.supelec.fr}
\author[lss]{Romeo Ortega}
\ead{romeo.ortega@lss.supelec.fr}
\author[als]{Abdelkrim Benchaib}
\ead{abdelkrim.benchaib@alstom.com}

\cortext[cor1]{Corresponding author} \address[lss]{Laboratoire des Signaux et Syst\`{e}mes de Centrale Sup\'{e}lec, 3, rue Joliot Curie - 91192, Gif-sur-Yvette, France}
\address[als]{ Alstom Grid, 102, avenue de Paris - 91300 Massy, France}

%\markboth{Journal of Emerging and Selected Topics in Power Electronics,~Vol.~, No.~, ~}%
%{Zonetti \MakeLowercase{\textit{et al.}}: Bare Demo of IEEEtran.cls for Journals}
%\maketitle

% As a general rule, do not put math, special symbols or citations
% in the abstract or keywords.
\begin{abstract} 
The problem of modeling and control of multi--terminal high--voltage direct--current transmission systems is addressed in this paper, which contains five main contributions. First, to
propose a unified, physically motivated, modeling framework---based on port--Hamiltonian representations---of the various  network topologies used in this application. Second, to prove that the system
can be globally asymptotically stabilized with a decentralized PI control, that exploits its passivity properties. Close connections between the proposed PI and the popular Akagi's PQ instantaneous power method
are also established. Third, to reveal the transient performance limitations of the proposed controller that, interestingly, is shown to be intrinsic to PI passivity--based control. Fourth, motivated by the latter, an outer--loop that overcomes the aforementioned limitations is proposed. The performance limitation of the PI, and its drastic improvement using outer--loop controls, are verified via simulations on a three--terminals benchmark example. A final contribution is a novel formulation of the power flow equations for the centralized references calculation.
\end{abstract}

% Note that keywords are not normally used for peerreview papers.
\begin{keyword}
multi--terminal HVDC transmission systems; passivity--based control; port--Hamiltonian systems; PI control;  nonminimum--phase systems; PQ and DC voltage control; performance limitations; power flow equations.
\end{keyword}

% For peer review papers, you can put extra information on the cover
% page as needed:
% \ifCLASSOPTIONpeerreview
% \begin{center} \bfseries EDICS Category: 3-BBND \end{center}
% \fi
%
% For peerreview papers, this IEEEtran command inserts a page break and
% creates the second title. It will be ignored for other modes.

\maketitle

\section{Introduction}\label{intro}
In the last few years it has been observed an ever widespread utilization of renewable energy utilities, mainly based on wind and solar power \cite{jager,ANDERS}. Because of its intermittent nature the integration of
this generating units to the existing alternating--current (AC) distribution network poses a challenging problem \cite{carra,lund}. For this, and other reasons related to reduced losses and problems with reactive
power and voltage stability in AC systems, the option of high--voltage direct--current (HVDC) transmission systems is gaining wide popularity, see \cite{ANDERS,kirby,joha} for additional motivations and details.

The main components of an HVDC system are  AC to DC power converters, transmission lines and voltage bus capacitors. The power converters connect the AC sources---that are associated to renewable generating
units or to AC grids---to an HVDC grid through voltage bus capacitors. Two notable features distinguish HVDC systems from standard AC ones: the absence of a global signal (the synchronization frequency) and the central role
played by the power converters, the dynamics of which are highly {\em nonlinear}.

For its correct operation, HVDC systems---like all electrical power systems---must satisfy a large set of different regulation objectives that are, typically, associated to the multiple time--scale behavior of
the system. One way to deal with this issue, that prevails in practice, is the use of hierarchical architectures. These are nested control loops, at different time scales, each one providing references for an
inner controller \cite{kazm,iravani}. This paper focuses mainly on the ``innermost" control loop for HVDC transmission systems, that is, the control at the power converter level---in the
sequel referred as {\em inner--loop} control. The objective of the inner--loop control is to asymptotically drive the HVDC system towards a desired steady--state regime determined by the
user. Regulation should be achieved selecting a suitable switching policy for the converters. A major practical constraint is that the control should be {\em decentralized}. That is, the controller of each
power converter has only available for measurement its corresponding coordinates, with no exchange of information between them.

Starting from single AC/DC converter models many strategies have been proposed for the inner--loop control of the power converters used in HVDC systems \cite{akagi,lee,pinto}, with the dominating structure consisting of nested PI loops: an inner current control loop and an outer loop to regulate the capacitor voltage. The rationale used to justify this structure is the time--scale separation between currents and voltages. However, with the notable exception of \cite{torres}, the performance claims are not corroborated by rigorous stability proofs. Because of the absence of theoretical analysis, a time--consuming and expensive procedure to tune the gains of the PIs is then required to complete the design. This is typically done based on the linearization of the system that, because of the {highly} nonlinear behavior of the converters and the wide range of the operating regimes, often yields below--par performances.

The main objective of this paper is to contribute, if modestly, towards the development of a general, theoretically--founded procedure for the modeling, analysis and control of HVDC systems. With the intention to bridge the gap between theory and applications, one of the main concerns is to establish connections between existing engineering solutions, usually derived via  {\em ad--hoc} considerations, and the solutions stemming from theoretical analysis. In particular, it is shown that modifying the theoretically--based inner--loop controller to incorporate the standard considerations of outer--loop control considerably improves its transient performance.

The contributions of the paper are the following.
\begdes
\item[(C1)] To propose a unified, physically motivated, modeling framework of the various network topologies used in HVDC systems. This framework is based on port--Hamiltonian (pH) models of the system components \cite{ESCVANORT,VAN,ZON} combined with a suitable graph theoretic representation of their interconnection \cite{SHAetal}. The lines are modeled as simple series resistance--inductance ($RL$) circuits and the capacitors are assumed to be leaky elements, all components being linear. Although many different kinds of power converters are used in applications the dominant structure is the so--called voltage--source rectifier (VSR), which are the ones considered in the paper. The
network is described via a {\em meshed topology}, which allows for possible direct connection of the VSRs with the transmission lines.

\item[(C2)] In the spirit of \cite{HERetal,JAYetal,SANVER} it is proved that the incremental model of the VSR defines a {\em passive} map
with respect to some suitably designed output. A consequence of this fundamental property is that a decentralized PI passivity--based controller (PBC) {\em globally asymptotically stabilizes} (GAS) any assignable
equilibrium, with no restriction imposed on the (positive) gains of the PI--PBC. It is also shown that the proposed PI--PBC is closely related with Akagi's PQ instantaneous power method \cite{akagi} that was
derived (without a stability analysis) invoking power balance considerations and is standard in applications.

\item[(C3)] It is well--known that passive systems are minimum phase and have relative degree one \cite{BYRISIWIL,VAN}. Consequently, the attainable performance of a PI--PBC is limited by its associated zero dynamics. Another contribution of the paper is the proof that, in HVDC systems, the zero dynamics is ``extremely slow", stymying the achievement of fast transient responses. On the other hand, it is also shown that other inner--loop PI controllers reported in the literature may exhibit unstable behavior because the zero dynamics associated to the corresponding outputs are {\em non--minimum phase}.

\item[(C4)] Common engineering practice is followed to improve the transient performance, by adding an outer--loop that determines the PI--PBC reference signals---the widely diffused \textit{droop control} \cite{sandberg,sun}. After revisiting its standard formulation, a modification of the standard PI--PBC is proposed, showing that the intrinsic performance limitation are overcome, further preserving global asymptotic stability. The drastic improvement with this outer controller is finally verified via simulations on a three--terminals benchmark example.

\item[(C5)] A final contribution relates to the design of the last outer--loop controller in terms of a centralized reference calculator. Although there is no universal agreement to define the tasks of this control loop it usually relates to the regulation of the flow of active and reactive power to be injected into the network while keeping the voltage of the capacitors near a desired constant value. Most popular approaches, which usually invoke {\em ad--hoc} considerations, are reviewed and contextualized in the present framework.

\enddes

The remaining part of the paper is structured as follows. In Section \ref{modsection}, the mathematical model of the system is established (C1). Then, to determine the achievable behaviors, a study of the assignable equilibria is necessary. This analysis is done in Section \ref{sssection}. The main contribution (C2) is next developed in Section \ref{main}, with the design of the decentralized passivity--based PI controller. Slow transients exhibited in simulations motivate the subsequent performance analysis (C3), that is carried out in Section \ref{zdsection}. Sections \ref{sec6} and \ref{sec7} are then devoted to revisit standard outer--loop controllers (C4) and the problem of references calculation (C5). Conclusions and future work follow then in Section \ref{sec8}.\\

\noindent {\bf Notation} All vectors are column vectors. Given positive integers $n$, $m$, symbols  $\underline 0_n\in\rea^n$ denotes the vector of all zeros, $\mathbbm{1}_n\in\rea^n$ the
vector with all ones, $\mathbb{I}_n$ the $n \times n$ identity matrix, $\underline 0_{n\times m}$ the $n \times m$ column matrix of all zeros. $x:=\col(x_1,\dots,x_n)\in\rea^n$ denotes a
vector with entries $x_i \in \rea$, when clear from the context it is simply referred as $x:=\col(x_i)$. $\diag\{a_i\}$ is a diagonal matrix with entries $a_i \in \rea$ and $\text{bdiag}\{A_i\}$
denotes a block diagonal matrix with entries the matrices $A_i$. For a function $f:\rea^n \to \rea$, $\nabla f$ denotes the transpose of its gradient. The subindex $i$, preceded by a comma
when necessary, denotes elements corresponding to the $i$-th subsystem. 
%To deal with bilinear representations we further use a by-blocks adaptation of the Kronecker product, that we represent with the symbol $\otimes$, and define as follows. Let $A\in\mathbb{R}^{pn\times pn}$ a square matrix constituted by $n^2$ square blocks of dimension $p$, denoted as $A_{ij}\in\mathbb{R}^{p\times p}$, and $B\in\mathbb{R}^p$ a vector of $p$ elements, denoted as  $b_i\in\mathbb{R}$. Then
%\begin{equation*}
%A\otimes B:=\begin{bmatrix}
%A_{11}\cdot\mathrm{diag}\{b_i\}&\dots &A_{1n}\cdot\mathrm{diag}\{b_i\}\\
%\vdots&\ddots& \vdots \\
%A_{n1}\cdot\mathrm{diag}\{b_i\}&\dots &A_{nn}\cdot\mathrm{diag}\{b_i\}
%\end{bmatrix}.
% \end{equation*}
%
%%%%%%%%%%%%%%%%%%%%%%%%%%%%%%%%%%%%%%%%%%%%%%
\section{Energy--based Modeling}
\label{modsection}
%%%%%%%%%%%%%%%%%%%%%%%%%%%%%%%%%%%%%%%%%%%%
%
In \cite{SHAetal} it was shown that electrical power systems can be represented by a directed graph\footnote{A directed graph is an ordered 3-tuple, $\mathcal{G}=\{\mathcal{\mathcal V,\mathcal E},\Pi\}$,
consisting of a finite set of nodes $\mathcal{V}$, a finite set of directed edges $\mathcal{E}$ and a mapping $\Pi$ from $\mathcal{E}$ to the set of ordered pairs of $\mathcal{V}$, where no self-loops are
allowed.} where the relevant electrical components correspond to edges and the buses correspond to nodes. Moreover, to underscore the physical structure of the components, they are modeled as pH systems. In
this section the same procedure is applied to describe the dynamics of HVDC transmission systems.
%
%%%%%%%%%%%%%%%%%%%%%%%%%%
\subsection{Assumptions}
As indicated in the Introduction, the relevant components for an HVDC transmission system are: VSRs, $RL$ transmission lines and voltage bus capacitors.  Throughout the paper the
following assumptions---which are widely accepted in practice---are made.
\begdes
\item[(A1)] Balanced operation of the three phase line voltages.
\item[(A2)] Synchronized operation  of the VSRs\footnote{Synchronized operation  of the VSRs is usually achieved via robust phase--locked--loop detection ofthe latching frequencies \cite{iravani}.}.
\item[(A3)] Ideal four quadrant operation of the VSRs.
\enddes

Assumptions A1 and A2 considerably simplify the modeling and control problems, as they allow the description of the three--phase dynamics of the VSRs in suitably oriented $dq0$ reference frames, where the value
of the $0$--component is always zero, thus reducing the three AC quantities to two DC quantities. Consequently, it is possible to express the regulation objective as a standard {\em equilibrium stabilization}
problem of the nonlinear dynamical system describing the behavior of the HVDC system. Assumption A3 directly follows by assuming an HVDC transmission system based on VSRs instead of current source rectifiers,
which is an alternative converter topology used in HVDC systems. As a matter of fact, since the VSRs do not depend on line--commutations, all the four quadrants of the operating plane are possible, hence
Assumption A3 is automatically satisfied for the system under consideration  \cite{abbas}.

%
%%%%%%%%%%%%%%%%%%%%%%%%%%
\subsection{Network topologies: A graph description}
It is possible to distinguish two kinds of topologies used in  HVDC transmission systems: \textit{radial} and \textit{meshed} topology \cite{agelidis,bucher,gomis}, which are illustrated in Fig. \ref{topo}. The
radial topology is widely used for systems in which a certain number of off--shore stations feeds on--shore stations with no connection between them. This is the case for example of on--shore stations situated
on opposite seacoasts while the off--shore stations are placed in their middle \cite{bucher,kirby}. However, in a more general setting one has to consider the situation in which the stations are directly connected with lines, that corresponds to a meshed topology. In the interest of brevity, a systematic
way to build global pH models is presented only for the {meshed} topology. For a {radial topology}, analogous results can be obtained, for which the interested reader is referred to \cite{ECCZonetti}.

\begin{figure}[ht]
 \centering
 \includegraphics[width=1\linewidth]{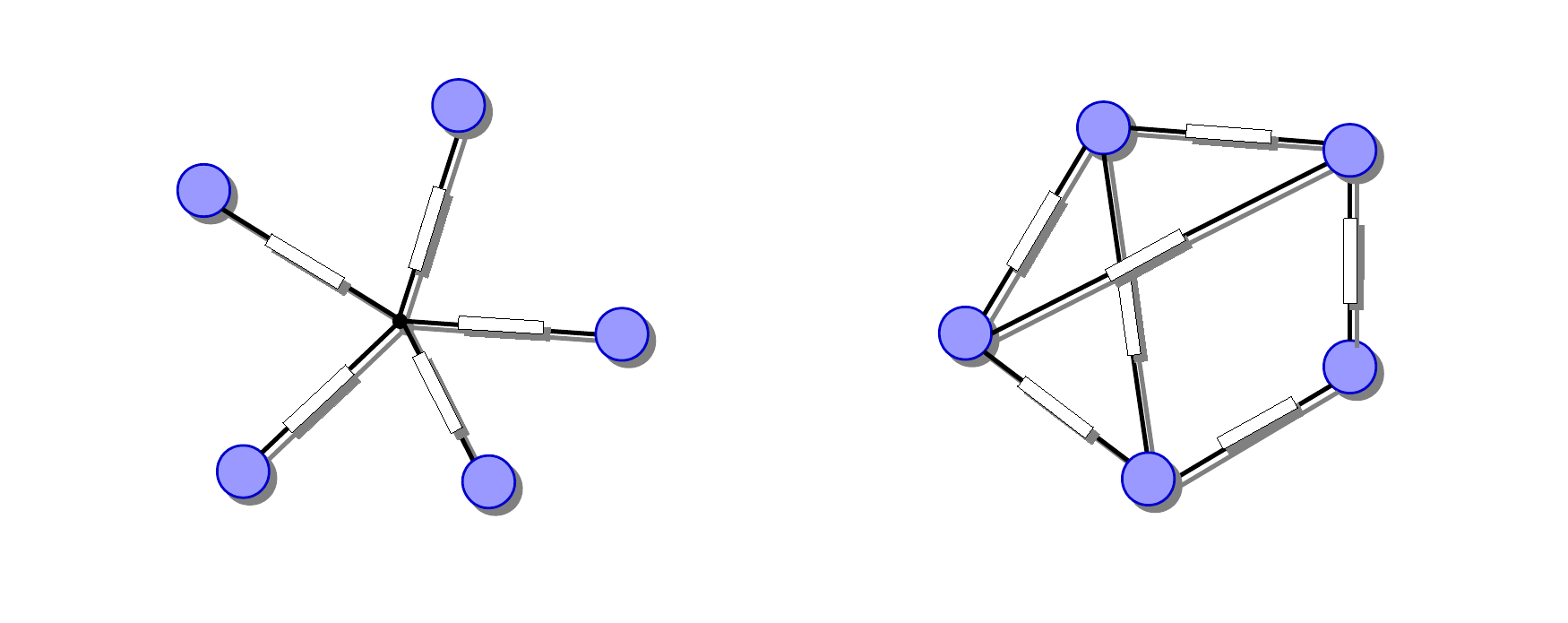}
 \caption{Nodal representation of HVDC transmission systems with radial and meshed topologies.}
 \label{topo}
\end{figure}

In order to give a formal representation of a topology the following definitions are adopted. A bus is called a \textit{VSR-bus} if a VSR is connected to it and a bus is called a
\textit{capacitor-bus} when only a capacitor is connected to it. Furthermore, a bus is called a \textit{reference-bus} when all the voltages of the buses in the network are measured with
respect to it. As the {reference-bus} is assumed to be at ground potential, it is also denoted as \textit{ground}. A general topology is then described by the incidence matrix $
\mathcal M$ associated
to the graph, where the nodes represent the {ground}, the {VSR} and the {capacitor-buses}; the edges represent the VSRs, the lines and the single capacitors that are interconnected to the
{ground} or to the voltage buses.

In a {meshed} topology, each VSR is connected to the {ground} and to a {VSR--bus}, while the lines directly connect {VSR--buses}, according to a determined meshed structure. The
number $n$ of VSRs is the same as voltage buses, {ground} excluded, and is lower or equal to the number $\ell$ of lines. Formally, this can be represented by a graph $\mathcal
G:=\{\mathcal{V},\mathcal{E},\Pi\}$ constituted by: $n+1$ ordered nodes, where $n$ nodes are associated to the {VSR--buses} and one node to the {ground}; $n+\ell$ ordered edges, where $n$
edges are associated to the VSRs and $\ell$ edges to the lines. The incidence matrix then --- following the mentioned order --- takes the form
\begin{equation}
\label{mdelta}
\mathcal M=\begin{bmatrix}
\mathbb{I}_n&M \\
-\mathbbm{1}^\top_{n}&\underline{0}^\top_\ell\\
\end{bmatrix}\in\mathbb{R}^{(n+1)\times (n+\ell)},
\end{equation}
where $M$ is the incidence matrix of the subgraph obtained by eliminating the VSR edges and the ground node.
\begin{remark}\em
\label{parallel}
In a {meshed} topology the only relevant components are the VSRs and the $RL$ transmission lines. As a matter of fact, because a VSR is associated to each node, the voltage bus capacitors
can be represented by an equivalent VSR output capacitor, that results to be the parallel interconnection of all capacitors attached to the node.
\end{remark}
%
%%%%%%%%%%%%%%%%%%%%%%%%%%%%
\subsection{Port--Hamiltonian models of the elements}
As explained above, the edges of the graph $\mathcal{G}$ contain the electrical components of the HVDC system, namely $n$ VSRs and $\ell$ $RL$ transmission lines, while the nodes are the buses. In this section,
a pH representation of each of these elements is derived, which are then interconnected---through power preserving interconnections---via the graph. Besides its physically appealing nature, the choice of a pH model is
motivated by the fact that---similarly to \cite{HERetal}---this structure is instrumental to derive the passivity property exploited in the controller design. To enhance readability the models of the VSRs and
the transmission lines are presented separately.
%
%%%%%%%%%%%%%%%%%%%%%%%%%%%%%%%%%
\subsubsection{Voltage source rectifiers}
In \cite{ESCVANORT,HERetal,ECCZonetti} the well--known average model of a single VSR shown in Fig. \ref{rectifier}, expressed in $dq$--coordinates and written in (perturbed) pH form is given. Similarly, a {\em set} of $n$ VSRs can also be represented in pH form as
\begin{equation}
\label{rectifierPH}
\begin{aligned}
\dot x_{R}&=[\mathcal{J}_{R}(u)-\mathcal{R}_{R}]\nabla\mathcal H_{R}+E_1V-E_3i_{R}\\
v_{R}&=E_3^\top \nabla\mathcal H_{R},
\end{aligned}
\end{equation}
with the following definitions.
\begin{itemize}
\item[-] State space variables the collection of inductors fluxes $(\phi_{d,i}, \phi_{q,i})$ and capacitor charges $q_{c,i}$ of every VSR, that is, $x_{R}:=\col(\col(\phi_{d,i}), \col(\phi_{q,i}), \col(q_{c,i})) \in\mathbb{R}^{3n}$.

\item[-] Energy function 
\begin{equation*}
\mathcal{H}_{R}(x_{R}):=\frac{1}{2} x_{R}^\top Q_{R} x_{R},\quad Q_R:=\mathrm{bdiag}\{L_R^{-1},L_R^{-1},C_R^{-1}\},
\end{equation*}
with
$$
L_{R}:=\mathrm{diag}\{L_{r,i}\} ,\quad  C_{R}:=\mathrm{diag}\{C_{r,i}\},
$$

where $L_{r,i},C_{r,i}$ are the inductance and capacitance of each VSR, respectively\footnote{Unless indicated otherwise all physical parameters of the system are positive constants.}.

\item[-] Duty cycles $u:=\col(u_{Rd},u_{Rq}) \in\mathbb{R}^{2n}$, where $u_{Rd}:=\col(u_{d,i})$ and $u_{Rq}:=\col(u_{q,i})$.

\item[-] External voltage sources $V:=\col (v_{d,i})\in\mathbb{R}^n$, where $v_{d,i}$ is the $d$ component of the AC sources. These voltages are assumed constant and positive.

\item[-] Port variables the out--going currents $i_{R}:=\col(i_{dc,i})\in\mathbb{R}^n$ and the voltages $v_{R}:=\col(v_{dc,i})\in\mathbb{R}^n$.

\item[-] Interconnection matrix
\begequ
\label{Jdecomp}
\mathcal{J}_{R}(u):=\sum_{i=1}^n(\mathcal{J}_{R0,i}L_{r,i}\omega_i+\mathcal{J}_{Rd,i}u_{d,i}+\mathcal{J}_{Rq,i}u_{q,i})
%
%\mathcal{J}_{R0}\otimes \Omega+\mathcal{J}_{Rd}\otimes u_{Rd}+\mathcal{J}_{Rq}\otimes u _{Rq}
 \endequ
 where $\omega_i $ are the AC sides frequencies and
\begin{align*}
&\mathcal{J}_{R0,i}:=\begin{cases}-1\;\mathrm{in}\; _{(i,n+i)}\\
 1\;\mathrm{in}\; _{(n+i,i)}\\
0\;\mathrm{elsewhere}
\end{cases}\\
&  \mathcal{J}_{Rd,i} :=\begin{cases}1\;\mathrm{in}\; _{(i,2n+i)}\\
 -1\;\mathrm{in}\; _{(2n+i,i)}\\
0\;\mathrm{elsewhere}
\end{cases} 
\mathcal{J}_{Rq,i} :=\begin{cases}-1\;\mathrm{in}\; _{(n+i,2n+i)}\\
 1\;\mathrm{in}\; _{(2n+i,n+i)}\\
0\;\mathrm{elsewhere}
\end{cases}
\end{align*}
\item[-] Dissipation matrix $\mathcal{R}_R:=\mathrm{bdiag}\{R_R ,R_R ,G_R \}$, where $R_{R}:=\diag\{R_{r,i}\}$ and $G_{R}:=\diag\{G_{r,i}\}$ , with $R_{r,i},G_{r,i}$ the resistance and conductance of each VSR.

\item[-] Port matrices $E_1:\begin{bmatrix}
\mathbb{I}_n&0&0
\end{bmatrix}^\top$, $E_3:=\begin{bmatrix}
0&0&\mathbb{I}_n
\end{bmatrix}^\top\in\mathbb{R}^{3n\times n}$.
\end{itemize}
\begin{figure}[ht]
 \centering
\includegraphics[width=1\linewidth]{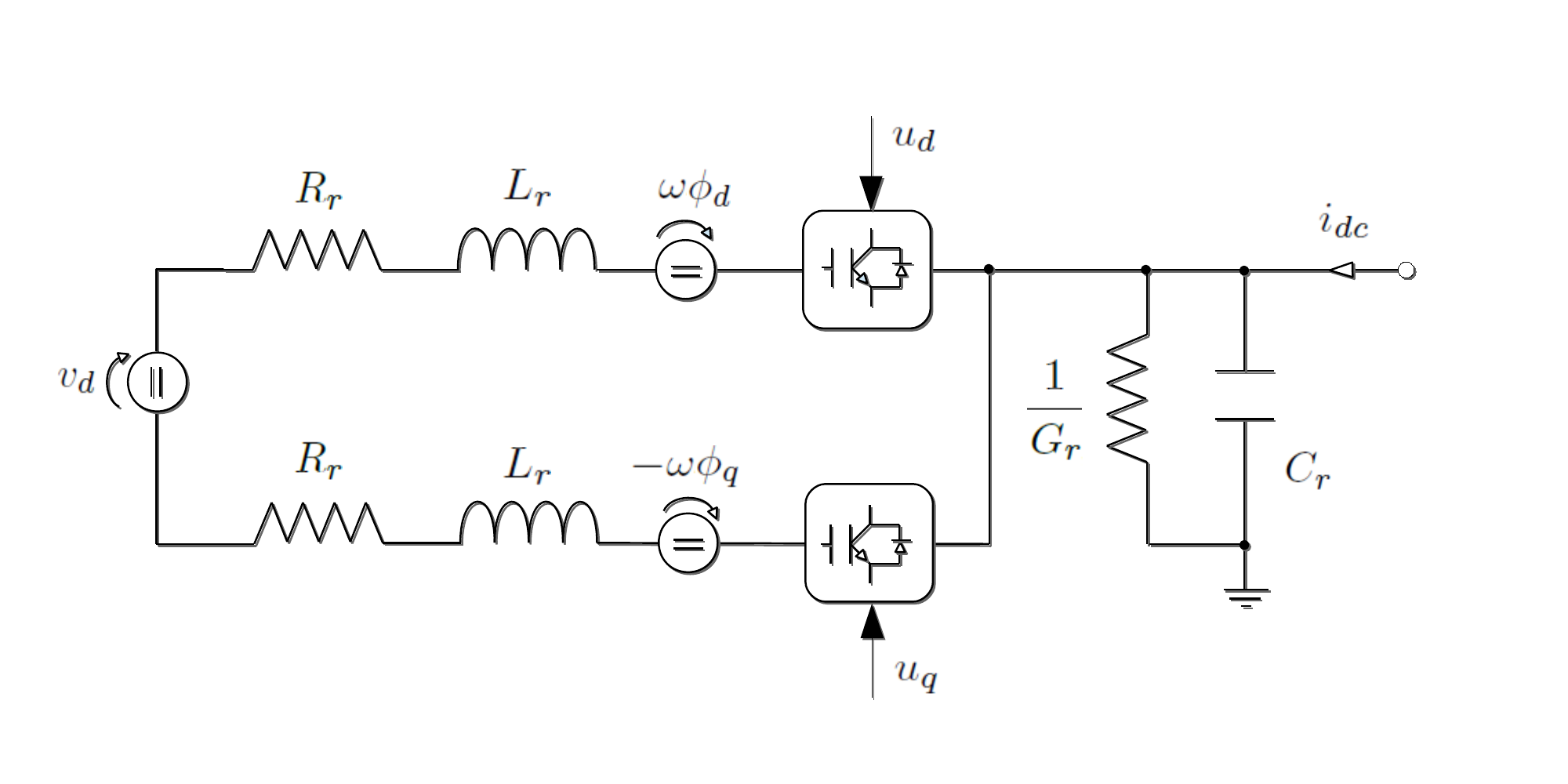}
 \caption{Schematic diagram of the equivalent circuit of a VSR in $dq$ frame.}
 \label{rectifier}
\end{figure}

\begrem
Note that, in view of the skew--symmetry of $\mathcal{J}_{R}(u)$, the VSRs satisfy the power balance equation
\begequ
\lab{powbal}
\underbrace{\dot {\mathcal H}_R}_{\mbox{stored\:power}}=-\underbrace{x_R^\top Q_R\mathcal{R}_RQ_Rx_R}_{\mbox{dissipated\;power}}+\underbrace{x_R^\top Q_RE_1V-x_R^\top Q_RE_3i_R}_{\mbox{supplied\;power}}
\endequ
\endrem
%
%%%%%%%%%%%%%%%%%%%%%%%%%%%%%%%%%%ù
\subsubsection{Transmission lines}
A set of $\ell$ $RL$ transmission lines can be represented by the pH system
\begin{equation}
\begin{aligned}
\dot x_{L} &=-\mathcal{R}_{L}\nabla\mathcal H_{L}+v_{L}\\
i_{L}&=-\nabla\mathcal H_{L},
\end{aligned}
\label{PHline}
\end{equation}
with  the following definitions.
\begin{itemize}
\item[-] State space variables the collection of inductor fluxes $x_{L}:=\mathrm{col}(\phi_{\ell,i})\in\mathbb{R}^\ell$ of every line.
\item[-] Energy function
\begin{equation*}
\mathcal H_{L}(x_{L}):=\frac{1}{2} x_L^\top Q_{L} x_{L},\qquad Q_{L}:=\mathrm{diag}\{\frac{1}{L_{\ell,i}}\},
\end{equation*}
where $L_{\ell,i}$ is the inductance of the line.

\item[-] Port variables the voltages at the terminals $v_{L}:=\mathrm col(v_{L,i})\in\mathbb{R}^\ell$ and the inductors currents $i_{L}:=\mathrm{col}(i_{\ell,i})\in\mathbb{R}^\ell$.

\item[-] Dissipation $\mathcal{R}_{L}=\mathrm{diag}\{R_{\ell,i}\}$, with $R_{\ell,i}$ the resistance of the line.
\end{itemize}
%
%%%%%%%%%%%%%%%%%%%%%%%%
\subsection{Overall interconnected system}
\lab{subsec2.4}
The interconnection laws can be obtained following the approach used in \cite{avdspartial}, where Kirchhoff's current and voltage laws (KCL and KVL, respectively) are expressed in relation to the incidence matrix. For a \textit{meshed} topology then it follows
\begin{equation}\label{interlaws}
\begin{aligned}
\textsf{[KCL]}\qquad&\mathcal M\mathcal{I}_e=\underline{0}_{n+1},\\
\textsf{[KVL]}\qquad&\mathcal M^\top\mathcal{V}=\mathcal{V}_e,
\end{aligned}
\end{equation}
where $\mathcal{I}_e:=\col (i_R,i_L)$, $\mathcal{V}_e:=\col (v_R,v_L)$ and $\mathcal{V}:=\col (v_1,\dots,v_{n})$, $v_0$ are the edge currents, the edge voltages, the nodes potentials and
the {ground} potential, respectively. The {ground} potential $v_0=0$ by definition. From \eqref{interlaws} and \eqref{mdelta} then follows
\begin{equation}
\lab{iril}
\begin{aligned}
i_R+M i_L&=\underline{0}_n,\quad -\mathbbm{1}_n^\top i_R=0,\\
v&=v_{R},\quad  M ^\top v=v_L.
\end{aligned}
\end{equation}
Recalling the expression for $i_L$ from \eqref{rectifierPH} and $v_R$ from \eqref{PHline} it is easy to get
\begin{equation}\label{intlaws2}
i_R=M \nabla\mathcal{H}_L,\qquad v_L=M ^\top E_3^\top\nabla\mathcal{H}_R.
\end{equation}
To obtain the overall pH representation it is then sufficient to combine \eqref{rectifierPH}, \eqref{PHline} and \eqref{intlaws2}, thus leading to
\begin{equation}\label{overall}
\dot{x}=[\mathcal{J}(u)-\mathcal{R}]\nabla\mathcal{H}+EV,
\end{equation}
with  the following definitions.
\begin{itemize}
\item[-] State space variables $x:=\col(x_R,x_L) \in\mathbb{R}^{3n+\ell}$.

\item[-] Energy function $\mathcal H(x):=\mathcal{H}_R(x)+\mathcal{H}_L(x)$.

\item[-] Duty cycles (controls) $u:=\col(u_{Rd},u_{Rq}) \in\mathbb{R}^{2n}$.

\item[-] Interconnection matrix
\begin{equation}\label{Joverall}
\mathcal{J}(u):=\begin{bmatrix}J_{R}(u) & -E_3M \\ M ^\top E_3^\top&\underline{0}_{\ell\times
\ell}\end{bmatrix}, 
\end{equation}

\item[-] Dissipation matrix
\begequ
\lab{dismat}
\mathcal{R}:=\mathrm{bdiag}\{\mathcal{R}_{R},\mathcal{R}_{L}\}>0.
\endequ
\item[-] Input matrix $E:= \begin{bmatrix}
E_1^\top \underline{0}_{\ell\times n}^\top
\end{bmatrix}^\top.$
\end{itemize}

\begrem
To simplify the notation in the pH representation it is selected a state representation of the system using energy variables, that is, inductor fluxes and capacitor charges, instead of the more commonly used
co--energy variables, {\em i.e.}, inductor currents and capacitor voltages. See \eqref{sysiv} and \cite{perez} for the derivation of the pH model in the latter coordinates. The coordinates are indeed related by
\begin{equation}\label{coenergy}
i_d=\frac{\phi_d}{L},\quad i_q=\frac{\phi_q}{L},\quad v_C=\frac{q_C}{C},\quad i_L=\frac{\phi_L}{L_\ell}.
\end{equation}
\endrem

\begrem
For ease of presentation it is assumed that the state of the system lives in $\mathbb{R}^{3n+\ell}$. Due to physical and technological constraints it is actually only defined in a subset of
$\mathbb{R}^{3n+\ell}$. In particular, the voltage of the DC link $v_C$ is strictly bounded away from zero.
\endrem
%
%%%%%%%%%%%%%%%%%%%%%%%%%%%%%%%%
\section{Assignable Equilibria}
\label{sssection}
%%%%%%%%%%%%%%%%%%%%%%%%%%%%%%%
%
A first step towards the development of a control strategy for the system \eqref{overall} is the definition of its achievable, steady--state behavior, which is determined by the assignable
equilibria. That is, the (constant) vectors $x^\star \in\mathbb{R}^{3n+\ell}$ such that
$$
[\mathcal{J}(u^\star )-\mathcal{R}]\nabla\mathcal H(x^\star )+EV=\underline{0}_{3n+\ell}
$$
for some (constant) vector $u^\star \in \rea^{2n}$. To identify this set the following lemmata are established.

\begin{lemma}\em
\lab{lem1}
The equilibria of the transmission line coordinates are given by
\begequ
\lab{equ1}
x^\star _L=(\mathcal{R}_LQ_L)^{-1}M^\top E_3^\top Q_Rx_R^\star.
\endequ
\end{lemma}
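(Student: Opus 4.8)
The plan is to isolate, inside the equilibrium equation $[\mathcal{J}(u^\star)-\mathcal{R}]\nabla\mathcal H(x^\star)+EV=\underline 0_{3n+\ell}$, the block that corresponds to the transmission--line coordinates $x_L$, and then solve it directly for $x_L^\star$. Since the energy function is quadratic, $\nabla\mathcal H(x)=\col(Q_Rx_R,Q_Lx_L)$, so the gradient splits along the same partition $x=\col(x_R,x_L)$ as the state, which makes the block extraction clean.

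First I would substitute the explicit block forms of $\mathcal{J}(u)$ from \eqref{Joverall}, of $\mathcal{R}$ from \eqref{dismat}, and of the input matrix $E=\col(E_1^\top,\ \underline 0_{\ell\times n}^\top)^\top$. Reading off the last $\ell$ rows: the interconnection matrix contributes $M^\top E_3^\top(Q_Rx_R)$, the dissipation matrix contributes $-\mathcal{R}_L(Q_Lx_L)$, and the input term contributes nothing because $E$ vanishes on the line rows. Hence the line block of the equilibrium equation reads
\[
M^\top E_3^\top Q_R x_R^\star-\mathcal{R}_L Q_L x_L^\star=\underline 0_\ell,
\]
which is nothing but the steady--state version of the line dynamics \eqref{PHline} combined with the interconnection identity $v_L=M^\top E_3^\top\nabla\mathcal H_R$ from \eqref{intlaws2}.

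Then I would invert $\mathcal{R}_L Q_L$: since $\mathcal{R}_L=\diag\{R_{\ell,i}\}$ and $Q_L=\diag\{1/L_{\ell,i}\}$ are both diagonal with strictly positive entries (all physical parameters being positive), their product is diagonal and positive definite, hence invertible, and solving yields
\[
x_L^\star=(\mathcal{R}_L Q_L)^{-1}M^\top E_3^\top Q_R x_R^\star,
\]
i.e. \eqref{equ1}. I would also remark that this identity holds for \emph{every} equilibrium and is independent of $u^\star$, so no restriction on the control enters at this stage; the choice of $u^\star$ and the remaining $3n$ equations (the VSR block) determine the admissible $x_R^\star$, which is treated subsequently. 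The computation is essentially mechanical; the only point requiring care is the block bookkeeping of $E_3$ and $M$, so that the coupling term appears as $M^\top E_3^\top Q_R x_R$ and not its transpose or a misplaced sign.
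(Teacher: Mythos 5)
Your proof is correct and follows essentially the same route as the paper: the paper sets the line subsystem \eqref{PHline} to zero and substitutes $v_L^\star=M^\top E_3^\top Q_R x_R^\star$ from \eqref{intlaws2}, which is exactly the line block you extract from the assembled equilibrium equation of \eqref{overall}. The invertibility of $\mathcal{R}_LQ_L$ by positivity of the physical parameters is likewise the (implicit) final step in both arguments.
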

\begin{proof}
Setting to zero the left--hand side of \eqref{PHline}, calculated at $x_L^\star $, gives
$$
\underline{0}_\ell=-\mathcal{R}_LQ_Lx^\star _L+v_L^\star \quad\Rightarrow\quad x^\star _L=(\mathcal{R}_LQ_L)^{-1}v_L^\star .
$$
Moreover, from \eqref{intlaws2} it follows that $v_L^\star =M^\top E_3^\top Q_Rx_R^\star $, that replaced in the equation above completes the proof.
\end{proof}

\begin{lemma}\em
\lab{lem2}
The equilibria of the VSRs coordinates are the solution of the $n$ quadratic equations, $i=1\dots n$
\begequ
\lab{equ2}\resizebox{1\hsize}{!}{$
-\frac{R_i}{L_{r,i}^2} \left[(\phi_{d,i}^\star)^2+(\phi_{q,i}^\star)^2\right]-\frac{G_i}{C_{r,i}^2}(q_{C,i}^\star)^2+\frac{v_{d,i}}{L_{r,i}}\phi_{d,i}^\star-\frac{1}{C_{r,i}}q^\star_{C,i}i_{dc,i}^\star=0,$}
\endequ
with $
\mathrm{col}(i^\star_{dc,i})=M\mathcal{R}_L^{-1}M^\top \mathrm{col}(q_{C,i}^\star).
$
\end{lemma}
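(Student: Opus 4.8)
The plan is to exploit the near-block-diagonal structure of the VSR subsystem. Although $\mathcal{J}_R(u^\star)$ in \eqref{rectifierPH}--\eqref{Jdecomp} depends on the unknown equilibrium control $u^\star$, the $i$-th VSR involves only the triple of coordinates $\{i,n+i,2n+i\}$, on which $\mathcal{J}_R(u^\star)$ restricts to the $3\times3$ \emph{skew-symmetric} block $\left[\begin{smallmatrix}0&-L_{r,i}\omega_i&u_{d,i}^\star\\ L_{r,i}\omega_i&0&-u_{q,i}^\star\\ -u_{d,i}^\star&u_{q,i}^\star&0\end{smallmatrix}\right]$, while $\mathcal{R}_R$, $E_1$, $E_3$ are all diagonal-compatible with the same splitting. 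This allows a \emph{per-VSR} power-balance argument that eliminates $u^\star$ locally, producing $n$ scalar equations rather than the single equation one would get from the global balance \eqref{powbal}.

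First I would set $\dot x_R=\underline{0}_{3n}$ in \eqref{rectifierPH} and use \eqref{intlaws2} to replace the port current, obtaining $[\mathcal{J}_R(u^\star)-\mathcal{R}_R]\nabla\mathcal{H}_R(x_R^\star)+E_1V-E_3 i_R^\star=\underline{0}_{3n}$. Restricting this identity to the coordinates of the $i$-th VSR and left-multiplying by the corresponding gradient block $\nabla_i\mathcal{H}_R=\col(i_{d,i}^\star,i_{q,i}^\star,v_{C,i}^\star)=\col(\phi_{d,i}^\star/L_{r,i},\,\phi_{q,i}^\star/L_{r,i},\,q_{C,i}^\star/C_{r,i})$ annihilates the skew-symmetric $\mathcal{J}$-block---so both the rotational terms in $\omega_i$ and the control terms in $u_{d,i}^\star$, $u_{q,i}^\star$ drop out. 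What remains is $-R_{r,i}\big[(i_{d,i}^\star)^2+(i_{q,i}^\star)^2\big]-G_{r,i}(v_{C,i}^\star)^2+v_{d,i}\,i_{d,i}^\star-v_{C,i}^\star i_{dc,i}^\star=0$, and substituting $i_{d,i}^\star=\phi_{d,i}^\star/L_{r,i}$, $i_{q,i}^\star=\phi_{q,i}^\star/L_{r,i}$, $v_{C,i}^\star=q_{C,i}^\star/C_{r,i}$ gives precisely \eqref{equ2}.

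It then remains to express $i_{dc,i}^\star$. Stacking, $\col(i_{dc,i}^\star)=i_R^\star=M\nabla\mathcal{H}_L(x_L^\star)$ by \eqref{intlaws2}; inserting $x_L^\star$ from Lemma \ref{lem1} and using that the diagonal matrices $Q_L$ and $\mathcal{R}_L$ commute gives $\nabla\mathcal{H}_L(x_L^\star)=Q_L(\mathcal{R}_LQ_L)^{-1}M^\top E_3^\top Q_R x_R^\star=\mathcal{R}_L^{-1}M^\top E_3^\top Q_R x_R^\star$, and since $E_3^\top Q_R x_R^\star=\col(v_{C,i}^\star)$ this is $\col(i_{dc,i}^\star)=M\mathcal{R}_L^{-1}M^\top\col(q_{C,i}^\star)$ (with the capacitance scaling absorbed as in the paper's notation). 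If one also wants the converse---that any solution is an assignable equilibrium---it suffices to note that the first two component equations determine $u_{d,i}^\star$ and $u_{q,i}^\star$ uniquely whenever $v_{C,i}^\star\neq0$, which holds by the standing bound on the DC-link voltage.

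The only real work is the bookkeeping: verifying that the restriction to $\{i,n+i,2n+i\}$ genuinely decouples VSR $i$ (immediate from the definitions of $\mathcal{J}_{R0,i},\mathcal{J}_{Rd,i},\mathcal{J}_{Rq,i},\mathcal{R}_R,E_1,E_3$) and tracking signs carefully enough to confirm the $3\times3$ block is skew-symmetric. Once that is in place the cancellation is automatic and nothing else is delicate.
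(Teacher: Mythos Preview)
Your argument is correct and follows the same idea as the paper's proof: a per-VSR power balance, where left-multiplying the $i$-th equilibrium block by its own co-energy vector kills the skew-symmetric part of $\mathcal{J}_R(u^\star)$ and hence the unknown $u^\star$. The paper simply cites \cite{sanchez} for this step, whereas you spell it out; your derivation of $\mathrm{col}(i_{dc,i}^\star)$ via Lemma~\ref{lem1} and \eqref{intlaws2} also matches the paper's, and your parenthetical about the capacitance scaling correctly flags the slight notational looseness in the statement.
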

\begin{proof}
In \cite{sanchez} it is shown that the set of admissible equilibria of a VSR is obtained by setting equal to zero the power balance of the VSR, that for $n$ VSRs is equivalent to \eqref{equ2}. To complete the proof, it is now sufficient to recall definitions $$\mathrm{col}(i^\star_{dc,i})=i_R^\star,\qquad E_3^\top Q_Rx_R^\star=\mathrm{col}(q_{C,i}^\star),$$ together with \eqref{intlaws2}, \eqref{equ1}.
\end{proof}
The main result of the section is now presented, the proof of which follows immediately from the lemmata above.
\begin{proposition}\em
\label{prop}
The set of assignable equilibria of the system \eqref{overall} is given by
\begequ
\mathcal{E}:=\{ x^\star  \in\mathbb{R}^{3n+\ell}\;\vert \; \eqref{equ1}\;\mbox{and}\; \eqref{equ2}\; \mbox{hold}\}.
\endequ

\end{proposition}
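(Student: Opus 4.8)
The plan is to characterize the assignable equilibria of \eqref{overall} by splitting the defining condition $[\mathcal{J}(u^\star)-\mathcal{R}]\nabla\mathcal{H}(x^\star)+EV=\underline{0}_{3n+\ell}$ according to the block structure inherited from \eqref{Joverall}--\eqref{dismat} into a ``transmission--line part'' (the last $\ell$ rows) and a ``VSR part'' (the first $3n$ rows), writing $x^\star=\col(x_R^\star,x_L^\star)$. The line part carries no dependence on $u^\star$ and, by Lemma \ref{lem1}, is equivalent to the algebraic constraint \eqref{equ1} expressing $x_L^\star$ in terms of $x_R^\star$. This settles one half of the desired characterization and, in addition, determines the port current $i_R^\star=M\nabla\mathcal{H}_L(x_L^\star)$ appearing in the VSR part through \eqref{intlaws2}, yielding $\col(i_{dc,i}^\star)=M\mathcal{R}_L^{-1}M^\top\col(q_{C,i}^\star)$.

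Next I would analyze the VSR part, the first $3n$ rows, which do involve the duty cycles through the decomposition \eqref{Jdecomp} of $\mathcal{J}_R(u^\star)$. For each converter $i$, the $d$--flux equation (resp.\ the $q$--flux equation) is affine in $u_{d,i}^\star$ (resp.\ $u_{q,i}^\star$), with coefficient $\pm q_{C,i}^\star/C_{r,i}$, i.e.\ the capacitor voltage, which is nonzero by the standing physical requirement that the DC--link voltage be bounded away from zero; hence these two equations can be solved uniquely for $(u_{d,i}^\star,u_{q,i}^\star)$ for any $x_R^\star$. Substituting the result into the remaining ($q_{C,i}$--charge) equation, the $\omega_i$ cross--terms cancel and one is left with a single scalar constraint per converter, which is exactly the zero power--balance equation \eqref{equ2}. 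This is the content used in Lemma \ref{lem2} (following \cite{sanchez}): the VSR part admits a solution $u^\star$ if and only if \eqref{equ2} holds.

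Putting the two parts together, $x^\star$ is an assignable equilibrium of \eqref{overall} precisely when both \eqref{equ1} and \eqref{equ2} hold, which is the definition of $\mathcal{E}$, and the proposition follows. The only genuine subtlety is the elimination of the duty cycles $u^\star$ in the VSR part --- one must verify that this dependence is nondegenerate (nonvanishing capacitor voltage) so that $u^\star$ can be removed without loss of equilibria, and that the elimination collapses to the clean quadratic form \eqref{equ2} rather than an implicit relation --- but this is already packaged in Lemma \ref{lem2}, so given Lemmas \ref{lem1} and \ref{lem2} no further work is required and the result is immediate.
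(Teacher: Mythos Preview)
Your proposal is correct and follows exactly the paper's approach: the paper states that the proposition ``follows immediately from the lemmata above,'' i.e., Lemmas \ref{lem1} and \ref{lem2}, and your argument is precisely an unpacking of that immediate deduction. The extra discussion you give about eliminating $u^\star$ via the nonvanishing capacitor voltage is already absorbed into Lemma \ref{lem2} (and Remark \ref{remU}), so, as you note, no additional work beyond invoking the two lemmas is needed.
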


From the derivations above it is clear that the equilibria of the network are univocally determined by the equilibria of the VSRs. Moreover, the latter should satisfy the quadratic equations \eqref{equ2},
which are the well--known \textit{power flow steady--state equations} (PFSSE) of the individual VSR subsystems. A question of  interest is how to select from this set the equilibrium points that correspond to some {\em desired behavior}. In the latter definition there are many practical considerations to be taken into account, the discussion of which is postponed to Sections \ref{sec6} and \ref{sec7}.

\begin{remark}\em
\label{remU}
It is well--known that for affine systems of the form $\dot x = f(x)+g(x)u$ the assignable equilibrium set is given by
$$
\{ x^\star  \in \rea^n\;\vert\; g^\perp(x^\star )f(x^\star )=0\}
$$
where $g^\perp(x)$ is a full--rank left annihilator of $g(x)$. Moreover, given $x^\star $, the corresponding equilibrium control $u^\star $ is {\em univocally} determined by
$$
u^\star =-\left[ (g^\top g)^{-1}g^\top f\right] (x^\star ).
$$
Since \eqref{overall} is clearly of this form this relations hold true for the HVDC transmission system under study. See  \cite{sanchez} for additional details on this issue.
\end{remark}

\begin{remark}\em
Differently from the single VSR case, the set of assignable equilibria does not coincide, but is strictly contained, in the set where the power of the system is balanced, that is
$$
\mathcal{E}\subset\mathcal{P},\qquad \mathcal{P}:=\{x^\star\in\mathbb{R}^{3n+\ell}\;\vert\;\dot{\mathcal{H}}_R=0\}.
$$
This fact is clearly explained in \cite{sanchez}, where it is proved that a necessary condition for $\mathcal{E}\equiv\mathcal{P}$, is the system to be of co-dimension one.
\end{remark}

%
%%%%%%%%%%%%%%%%%%%%%%%%
\section{Main Result: Inner Loop Control}
\label{main}
%%%%%%%%%%%%%%%%%%%%%%%%
%
As indicated in the Introduction, this paper mainly focuses on the inner--loop control of HVDC transmission systems, that is, the control at the VSR level. For, in this
section it is presented a decentralized, globally asymptotically stabilizing, PI--PBC for the HVDC transmission system \eqref{overall}. The construction of the controller is inspired by previous works of the authors on PI--PBC, reported
in \cite{HERetal} and \cite{JAYetal}, which exploit the property of passivity of the \textit{incremental model}. The interested reader is referred to these references for additional details.

As indicated above, it is assumed that a desired operating point $x^\star \in \mathcal{E}$ has already been selected---further discussions on its choice are deferred to Sections \ref{sec6} and \ref{sec7}. To place the proposed PI--PBC in context, in the last part of this section the most commonly used inner--loop controls for HVDC transmission systems are  briefly reviewed and a connection is established with the widely popular Akagi's PQ method.
%
%%%%%%%%%%%%%%%%%%%%%%%%%%%
\subsection{Passivity of the incremental model}
\label{PBcontrol}
Along the lines of Proposition 1 in \cite{HERetal}, it is possible to establish passivity of the incremental model of the overall HVDC transmission system \eqref{overall} with respect to a suitable defined
output. As is well--known, global regulation of a passive output can be achieved with a simple PI controller. Regulation of the state to the desired equilibrium then follows provided a suitable detectability
assumption is satisfied \cite{VAN}.\smallbreak

\begin{proposition}\em
Consider the HVDC transmission system \eqref{overall}. Let $x^\star  \in\mathcal{E}$ be the desired equilibrium with corresponding (univocally defined) control $u^\star \in\mathbb{R}^{2n}$. Define the error signals
\begequ
\lab{tilxu}
\tilde x = x-x^\star ,\quad \tilde u = u - u^\star
\endequ
and the output signal
\begequ\label{y}
 y:=\begin{bmatrix}
\mathrm{col}(y_{d,i})\\
\mathrm{col}(y_{q,i})
\end{bmatrix}\in\mathbb{R}^{2n},
\endequ
with
\begin{equation*}
\begin{aligned}
y_{d,i}:=x_R^{*\top} Q_{R}\mathcal{J}_{Rd,i}Q_{R}x_{R},\qquad y_{q,i}:=x_R^{*\top} Q_{R}\mathcal{J}_{Rq,i}Q_{R}x_{R}.
\end{aligned}
\end{equation*}
The mapping $\tilde u \to y$ is {\em passive}. More precisely, the system verifies the dissipation inequality
\begin{equation}
\lab{dothd}
\dot{\mathcal{H}}_{d}\leq y^\top \tilde u,
\end{equation}
with storage function
$
\mathcal{H}_{d}(\tilde x)=\frac{1}{2}\tilde x^\top Q \tilde x.$
\end{proposition}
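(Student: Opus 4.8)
The plan is to compute $\dot{\mathcal H}_d$ along trajectories of the closed-loop error system and show the cross terms collapse to $y^\top\tilde u$ plus a nonpositive dissipation term. First I would write the error dynamics: since $x^\star$ is an equilibrium with control $u^\star$, subtracting $[\mathcal{J}(u^\star)-\mathcal{R}]\nabla\mathcal H(x^\star)+EV=\underline{0}$ from \eqref{overall} gives $\dot{\tilde x}=[\mathcal{J}(u)-\mathcal{R}]Q\tilde x+[\mathcal{J}(u)-\mathcal{J}(u^\star)]Qx^\star$, where I use $\nabla\mathcal H(x)=Qx$ with $Q:=\mathrm{bdiag}\{Q_R,Q_L\}$ and the linearity of the energy function. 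The key structural fact, from \eqref{Jdecomp}, is that $\mathcal{J}(u)$ is affine in $u$ with the line block independent of $u$; hence $\mathcal{J}(u)-\mathcal{J}(u^\star)=\sum_{i=1}^n\big(\mathcal{J}_{Rd,i}\tilde u_{d,i}+\mathcal{J}_{Rq,i}\tilde u_{q,i}\big)$ embedded in the VSR block, i.e. this difference depends only on $\tilde u$, linearly, and annihilates the line coordinates.

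Next I would differentiate $\mathcal{H}_d(\tilde x)=\frac12\tilde x^\top Q\tilde x$, obtaining $\dot{\mathcal H}_d=\tilde x^\top Q\dot{\tilde x}=\tilde x^\top Q[\mathcal{J}(u)-\mathcal{R}]Q\tilde x+\tilde x^\top Q[\mathcal{J}(u)-\mathcal{J}(u^\star)]Qx^\star$. For the first term, skew-symmetry of $\mathcal{J}(u)$ (noted in the remark after \eqref{Jdecomp} and visible in \eqref{Joverall}) kills the $\mathcal{J}$ contribution, leaving $-\tilde x^\top Q\mathcal{R}Q\tilde x\le 0$ since $\mathcal{R}>0$ by \eqref{dismat}. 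For the second term I would substitute the affine decomposition and use skew-symmetry of each $\mathcal{J}_{Rd,i},\mathcal{J}_{Rq,i}$ to rewrite $\tilde x^\top Q\,\mathcal{J}_{Rd,i}\,Qx^\star=-x^{\star\top}Q\,\mathcal{J}_{Rd,i}\,Q\tilde x=x^{\star\top}Q\,\mathcal{J}_{Rd,i}^\top\,Q\tilde x$. Here I must be slightly careful: since $\mathcal{J}_{Rd,i}$ acts only on the VSR coordinates, $x^{\star\top}Q\mathcal{J}_{Rd,i}Q\tilde x=x_R^{\star\top}Q_R\mathcal{J}_{Rd,i}Q_R\tilde x_R$, and because these matrices are constant and $x_R^\star$ is constant, $x_R^{\star\top}Q_R\mathcal{J}_{Rd,i}Q_R x_R^\star=0$ by skew-symmetry, so $x_R^{\star\top}Q_R\mathcal{J}_{Rd,i}Q_R\tilde x_R=x_R^{\star\top}Q_R\mathcal{J}_{Rd,i}Q_R x_R=y_{d,i}$, matching the definition in \eqref{y}; similarly for the $q$ components. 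Summing over $i$ yields exactly $\sum_i(y_{d,i}\tilde u_{d,i}+y_{q,i}\tilde u_{q,i})=y^\top\tilde u$, giving $\dot{\mathcal H}_d=-\tilde x^\top Q\mathcal{R}Q\tilde x+y^\top\tilde u\le y^\top\tilde u$, which is \eqref{dothd}.

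The main obstacle is bookkeeping rather than conceptual: one must correctly track which blocks the various matrices act on, and in particular verify the sign-juggling step that identifies $x_R^{\star\top}Q_R\mathcal{J}_{Rd,i}Q_R\tilde x_R$ with $y_{d,i}$ — this works precisely because the ``missing'' term $x_R^{\star\top}Q_R\mathcal{J}_{Rd,i}Q_R x_R^\star$ vanishes by skew-symmetry, so the error output can be replaced by the full-state output in the definition. One should also confirm that $\mathcal{J}(u)$ is genuinely skew-symmetric for every value of $u$ (not just at $u^\star$), which is immediate from \eqref{Joverall} together with skew-symmetry of $\mathcal{J}_R(u)=\sum_i(\mathcal{J}_{R0,i}L_{r,i}\omega_i+\mathcal{J}_{Rd,i}u_{d,i}+\mathcal{J}_{Rq,i}u_{q,i})$, each summand being skew by the explicit entry definitions. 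No positivity or detectability argument is needed for this proposition; those enter only when closing the loop with the PI, which is not part of this statement.
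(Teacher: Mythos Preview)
Your proposal is correct and follows essentially the same route as the paper: both proofs exploit the affine dependence of $\mathcal{J}(u)$ on $u$, the skew--symmetry of $\mathcal{J}_0,\mathcal{J}_{Rd,i},\mathcal{J}_{Rq,i}$, and the vanishing of the quadratic forms $\tilde x^\top Q\mathcal{J}_{R\cdot,i}Q\tilde x$ and $x^{\star\top}Q\mathcal{J}_{R\cdot,i}Qx^\star$ to reduce $\dot{\mathcal H}_d$ to $-\tilde x^\top Q\mathcal{R}Q\tilde x+y^\top\tilde u$. The only cosmetic difference is packaging: the paper collects the control--dependent part into an explicit input matrix $g(x)$ so that $\mathcal{J}(u)Qx=\mathcal{J}_0Qx+g(x)u$, writes the incremental model as $\dot x=(\mathcal{J}_0-\mathcal{R})Q\tilde x+g(x)\tilde u+g(\tilde x)u^\star$, and identifies $y=g^\top(x^\star)Q\tilde x$, whereas you keep the interconnection form and manipulate $[\mathcal{J}(u)-\mathcal{J}(u^\star)]Qx^\star$ directly---the underlying algebra is identical.
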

\begin{proof}
The proof mimics the proof of Proposition 1 in \cite{HERetal}. First of all, it is possible to write
$$
\mathcal{J}(u)Qx=\mathcal{J}_0Qx+g(x) u,
$$
where the following definitions
$$ \resizebox{1\hsize}{!}{$
\mathcal{J}_0:=
\begin{bmatrix}\sum_{i=1}^n(\mathcal{J}_{R0,i}L_{r,i}\omega_i) & -E_3M \\ M ^\top E_3^\top&\underline{0}_{\ell\times
\ell}\end{bmatrix}, 
\quad
g(x):=\begin{bmatrix}
g_{Rd}(x_R) &g_{Rq}(x_R)\\
\underline{0}_{\ell\times n}&\underline{0}_{\ell\times n}
\end{bmatrix}, $}
$$
with \begin{align*}
g_{Rd}(x_R):&=
\begin{bmatrix}
\mathcal{J}_{Rd,1}Q_Rx_R&\dots&\mathcal{J}_{Rd,n}Q_Rx_R
\end{bmatrix},\\
g_{Rq}(x_R):&=\begin{bmatrix}
\mathcal{J}_{Rq,1}Q_Rx_R&\dots&\mathcal{J}_{Rq,n}Q_Rx_R
\end{bmatrix}
\end{align*}
are adopted. Hence, it is possible to write \eqref{overall} in the alternative form
\begin{equation}\label{incremental}
\begin{aligned}
\dot{x}&=(\mathcal{J}_{0}-\mathcal{R})Q x+EV+g(x)  u\\
&=(\mathcal{J}_{0}-\mathcal{R})Q(\tilde x+x^\star )+EV+g(x)  (\tilde u+u^\star )\\
&=(\mathcal{J}_0-\mathcal{R})Q\tilde x+g(x) \tilde u+g(\tilde x)  u^\star
\end{aligned}
\end{equation}
where \eqref{tilxu} has been used to get the second equation and the fact that the assignable equilibria $x^\star $ and corresponding (constant) control $u^\star $ satisfy
$$
(\mathcal{J}_{0}-\mathcal{R})Q x^\star +EV+g(x^\star )  u^\star =0,
$$
is used to obtain the third equation.

The derivative of $\mathcal{H}_d$ along the trajectories of the incremental model \eqref{incremental} yields
$$
\dot{\mathcal{H}}_d   =   -\tilde x^\top Q\mathcal RQ\tilde x +\tilde x^\top Qg(x)  \tilde u = -\tilde x^\top Q\mathcal RQ\tilde x + y ^\top\tilde u
$$
where the skew--symmetry of $\mathcal{J}_{0}$, $\mathcal{J}_{Rd,i}$ and $\mathcal{J}_{Rq,i}$ is used in the first equation, and the fact that the output signal   can be rewritten as
$$
y=g^\top(x^\star )Qx=g^\top(x^\star )Q \tilde x
$$
is used to obtain the second identity. The proof is completed recalling that the dissipation matrix verifies $\mathcal R > 0$  to obtain the bound  \eqref{dothd}.
\end{proof}
%
%%%%%%%%%%%%%%%%%%%%%%%%%%%
\subsection{PI passivity--based control}
\label{PBCcontrol}
The first main result of the paper is then presented. \smallbreak
\begin{proposition}\em
\lab{pro2}
Consider the HVDC transmission system \eqref{overall}, with a desired steady--state $x^\star  \in\mathcal{E}$, in closed--loop with the decentralized PI control
\begin{equation}\label{control}
\begin{aligned}
u & = -K_P y - K_I\zeta,\quad \dot \zeta = y,
\end{aligned}
\end{equation}
with $y$ given in \eqref{y} and gain matrices
\begin{equation}
\begin{aligned}\label{gains}
K_P=\begin{bmatrix}
K_{Pd}&0\\
0&K_{Pq}
\end{bmatrix}\in\mathbb{R}^{2n\times 2n},\quad 
K_I=\begin{bmatrix}
K_{Id}&0\\
0&K_{Iq}
\end{bmatrix}\in\mathbb{R}^{2n\times 2n},
\end{aligned}
\end{equation}
where $K_{Pd} =\mathrm{diag}\{k_{Pd,i}\}, K_{Pq}=\mathrm{diag}\{k_{Pq,i}\}$, $K_{Id} =\mathrm{diag}\{k_{Id,i}\}$, $K_{Iq}=\mathrm{diag}\{k_{Iq,i}\}$. The equilibrium point $(x^\star ,K_I^{-1}u^\star )$ is globally asymptotically stable (GAS).
\end{proposition}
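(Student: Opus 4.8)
The plan is to augment the incremental storage function $\mathcal H_d$ with a quadratic penalty on the integrator error, use the passivity property established in the previous proposition to render its time derivative negative semidefinite, and then close the argument with LaSalle's invariance principle; radial unboundedness of the resulting Lyapunov function upgrades the conclusion to a global one.

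First I would locate the closed--loop equilibrium. Since the passive output $y=g^\top(x^\star)Q\tilde x$ vanishes at $x=x^\star$, the law $\dot\zeta=y$ forces $\zeta$ to be constant there, and the unique value $\zeta^\star$ for which the controller \eqref{control} reproduces the feedforward $u^\star$ is the one in the statement. Working in the error coordinates $\tilde x$ of \eqref{tilxu} together with $\tilde\zeta:=\zeta-\zeta^\star$, the controller becomes $\tilde u=-K_P y-K_I\tilde\zeta$ with $\dot{\tilde\zeta}=y$, while the plant is the incremental model \eqref{incremental}.

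Next I would take the Lyapunov candidate
\[
W(\tilde x,\tilde\zeta):=\mathcal H_d(\tilde x)+\tfrac12\,\tilde\zeta^\top K_I\tilde\zeta=\tfrac12\tilde x^\top Q\tilde x+\tfrac12\tilde\zeta^\top K_I\tilde\zeta ,
\]
which is positive definite and radially unbounded because $Q=\mathrm{bdiag}\{Q_R,Q_L\}>0$ and, by \eqref{gains}, $K_I>0$. Differentiating along the closed loop and using the identity $\dot{\mathcal H}_d=-\tilde x^\top Q\mathcal R Q\tilde x+y^\top\tilde u$ derived in the proof of the passivity proposition, the two cross terms in $\tilde\zeta$ cancel by symmetry of $K_I$, leaving
\[
\dot W=-\tilde x^\top Q\mathcal R Q\tilde x-y^\top K_P y\le 0 ,
\]
since $Q\mathcal R Q$ is congruent to $\mathcal R>0$ (see \eqref{dismat}) and $K_P>0$. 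Hence $W$ is non--increasing, all trajectories are bounded, and the equilibrium is globally stable.

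Finally, to get attractivity I would invoke LaSalle. On the largest invariant set contained in $\{\dot W=0\}$ one must have $\tilde x^\top Q\mathcal R Q\tilde x\equiv 0$, and positivity of $Q\mathcal R Q$ forces $\tilde x\equiv 0$, hence $y\equiv 0$ as well. Substituting $\tilde x\equiv 0$ into \eqref{incremental}---recalling that $g(\cdot)$ is \emph{linear} in $x_R$, so $g(\tilde x)=0$---gives $0=\dot{\tilde x}=g(x^\star)\tilde u=-g(x^\star)K_I\tilde\zeta$. The crucial point to verify is that $g(x^\star)$ has full column rank $2n$: from the explicit structure of $\mathcal J_{Rd,i}$ and $\mathcal J_{Rq,i}$, the $i$--th pair of columns of $g(x^\star)$ carries the entries $\pm q^\star_{C,i}/C_{r,i}$ in the $i$--th and $(n+i)$--th rows, which are pairwise disjoint across $i$; since the steady--state DC--link voltages are bounded away from zero (as assumed for $v_C$), these columns are linearly independent. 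Therefore $\tilde\zeta\equiv 0$, the invariant set reduces to the single point $(x^\star,\zeta^\star)$, and GAS follows from LaSalle combined with the global stability already shown. I expect this detectability/rank verification to be the only genuinely nontrivial step; everything else is the now--standard passivity--plus--PI Lyapunov computation.
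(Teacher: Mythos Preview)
Your proof is correct and follows essentially the same route as the paper: the same Lyapunov function $W=\mathcal H_d+\tfrac12\tilde\zeta^\top K_I\tilde\zeta$, the same cancellation yielding $\dot W=-\tilde x^\top Q\mathcal R Q\tilde x-y^\top K_Py$, and LaSalle to conclude. In fact you are more careful than the paper in the final step, explicitly verifying the full--column--rank of $g(x^\star)$ (via the nonzero steady--state DC voltages) to force $\tilde\zeta\equiv 0$; the paper merely asserts that LaSalle gives $\tilde x\to 0$ and defers the detectability argument to \cite{HERetal}.
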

\begin{proof}
Define the Lyapunov function candidate
\begin{equation}\label{weq}
W(\tilde x,\tilde \zeta):=\mathcal{H}_d(\tilde x)+\frac{1}{2}\tilde\zeta^\top K_I\tilde\zeta,
\end{equation}
where $
\tilde \zeta:=\zeta - K_I^{-1}u^\star .
$
The derivative of $W(x,\zeta)$ along the trajectories of the closed--loop system \eqref{overall}-\eqref{control} is given by
\begin{equation*}
\begin{aligned}
\dot{{W}}&=-\tilde x^\top Q\mathcal{R}Q\tilde x+{y}^\top\tilde{u}+\tilde\zeta^\top K_Iy\\
&=-\tilde x^\top Q\mathcal{R}Q\tilde x+{y}^\top\tilde{u}-(\tilde u^\top +y^\top K_P)y \\
& =  -\tilde x^\top Q\mathcal{R}Q\tilde x- {y}^\top K_P {y\leq 0},
\end{aligned}
\end{equation*}
which proves global stability. Asymptotic stability follows, as done in \cite{HERetal}, using LaSalle's arguments. Indeed, from the inequality above and the definition of $\mathcal{R}$ in \eqref{dismat} it is clear
that all components of the error state vector $\tilde x$ tend asymptotically to zero.
\end{proof}

\begrem
The proposed PI--PBC is decentralized in the sense that, for its implementation, each VSR control requires only the measurement of its corresponding inductor currents and capacitor voltage. Guaranteeing this property motivates the choice of block diagonal gain matrices \eqref{gains}.
\endrem

\begrem
The PI--PBC requires the selection of the desired values for the inductor currents and capacitor voltages that, clearly, cannot all be chosen arbitrarily. Instead, they have to be selected from the set of
assignable equilibrium points $\mathcal{E}$, that is determined by the {PFSSE}.  This set has a rather simple structure: the quadratic equation \eqref{equ2} defines the VSRs variables from which it is possible to {\em
univocally} determine the transmission lines coordinates via \eqref{equ1}.
\endrem
%
%%%%%%%%%%%%%%%%%%%%%%%%%%
\subsection{Other inner--loop controllers reported in the literature}
\label{liter}
In this section some of the inner--loop controllers for VSRs reported in the literature are reviewed. The vast majority of the papers reported on this topic---and, in general, of control of power converters
\cite{kazm,pinto}---uses the description of the dynamics in co--energy variables. To facilitate the reference to these works, the following model---that is immediately
obtained from \eqref{rectifierPH} and \eqref{coenergy}---is provided\footnote{For ease of presentation the discussion here is restricted to a {\em single} VSR. The extension to multiple VSRs being straightforward.}:
\begin{equation}
\label{sysiv}
\begin{aligned}
L\dot{i}_d&=-Ri_d+L\omega i_q-v_Cu_d+v_d\\
L\dot{i}_q&=-L\omega i_d-Ri_q-v_Cu_q\\
C\dot{v}_C&=i_du_d+i_qu_q-Gv_C-i_{dc}.
\end{aligned}
\end{equation}
The total energy of the VSR is
$$
\mathcal H(i_d,i_q,v_C):=\frac{1}{2}\left( L{i}^2_d+L i^2_q+ Cv^2_C \right),
$$
and the power balance is
\begequ
\lab{powbalvsr}
\dot{\mathcal H} = -R(i^2_d+i^2_q)-Gv^2_C+P-P_{dc},
\endequ
where the active and DC powers are defined as
\begin{equation}\label{power}
P=v_di_d, \qquad P_{dc}=v_Ci_{dc}.
\end{equation}
It is also common to define the reactive power as $Q=v_di_q$.

A caveat regarding the subsequent analysis is, however, necessary. When the VSRs are connected to the transmission lines the currents $i_{dc}$ are linked to the currents on the line via \eqref{iril}, which are clearly nonconstant. However, to simplify the analysis, it is assumed that they are {\em constant}. This can be justified by exploiting the fact that their rate of change is slow (with respect to the VSR dynamics).
Under this assumption the assignable equilibrium set of \eqref{sysiv} is given as
\begin{equation}
\label{equset}
{\mathcal{E}}=\{x \in \rea^3\;|\;R(i_d^2+i_q^2) - v_d i_d+G v_C^2 + i_{dc} v_C=0\}.
\end{equation}

Since $v_d$ and $i_{dc}$ are constant, it is then clear that the regulation of $P$, $Q$ and $P_{dc}$ are equivalent to the regulation of $i_d$, $i_q$ and $v_C$, respectively. In practice, because of the small
losses of the VSR, the value of $P$ slightly differs from $P_{dc}$, and consequently there is no interest in regulating the pair $i_d$ and $v_C$ at the same time.

In the literature, it is common to distinguish two modes of operation for a VSR:
\begite
\item[-] \textit{PQ control mode}, when the VSR is required to control the active and reactive power. This is achieved regulating to zero the output
\begin{equation}\label{DCC}
y_I=\lef[{c} i_d-i_d^{\mathrm{ref}}\\i_q-i_q^{\mathrm{ref}}\rig],
\end{equation}
where the superscript $(\cdot)^{\mathrm{ref}}$ is used to denote reference values---that {\em do not} necessarily  belong to the assignable equilibrium set. These kind of schemes are also called \textit{direct
current control} \cite{shuai}.

\item[-] \textit{DC voltage control mode}, when the VSR is required to control reactive power and DC voltage. In this case, the regulated output is
\begin{equation}\label{DOVC}
y_V=\lef[{c} v_C-v_C^{\mathrm{ref}}\\i_q-i_q^{\mathrm{ref}}\rig].
\end{equation}
These kind of schemes are also called \textit{direct
output voltage control} \cite{shuai}.
\endite

To regulate the outputs \eqref{DCC} and \eqref{DOVC} different controllers have been proposed in the literature, ranging from simple PI control \cite{lee,pinto} to feedback linearization
\cite{chen1,chen2,bencha}. Some of these papers include (invariably local) stability analysis. In Section \ref{zdsection} it is proved that $y_I$  and $y_V$, used for the PIs or with respect to
which feedback linearization is performed, have {\em unstable zero dynamics}. Consequently, applying high gains in the PIs will induce instability and the internal behavior of the feedback linearizing schemes
will be unstable.\footnote{This well--known phenomenon of nonlinear systems \cite{isidori} is akin to cancellation of unstable zeros of the plant with the unstable poles of the controller in linear
systems.} Simulations in Subsection \ref{simuinner} show that instability indeed arises for these schemes.

For the sake of comparison the passive output \eqref{y} in co--energy variables, for a single VSR, is provided:
\begin{equation}\label{yy}
y=\lef[{c} v^\star _{C}i_{d}-i^\star _{d}v_{C} \\ v^\star _{C}i_{q}-i^\star _{q}v_{C} \rig],
\end{equation}
where $(i^\star _{d},i^\star _{q},v^\star _{C}) \in {\mathcal{E}}$, that is, they belong to the assignable equilibrium set.

\begrem\label{universal}
The PI--PBC is \textit{universal}, in the sense that it can operate either in \textit{PQ} or \textit{DC voltage control mode}, depending on which equilibria are assigned as desired references, and which one is consequently determined via the PFSSE. One important advantage of this universal feature is that there is {\em no need to switch} between different controllers when the VSRs are requested to change their mode of operation---this is in contrast with other inner--loop schemes that require switchings between controllers, which is clearly undesirable in practice.
\endrem
%
%%%%%%%%%%%%%%%%%%%%%%%%%%%%%%%%%%%%%%%%%
\subsection{Relation of PI--PBC with Akagi's PQ method }
\label{akamet}
A dominant approach for the design of controllers for reactive power compensation using active filters (for three--phase circuits) is the PQ instantaneous power method proposed by Akagi, {\em et al.} in
\cite{akagi}. It consists of an outer--loop that generates references for the inner PI loops. The references are selected in order to satisfy a very simple heuristic: the AC active power $P$ has to be equal to
the DC power $P_{dc}$, thus ensuring the maximal power transfer from the AC to the DC side, and the reactive power should take a desired value. Now, using  \eqref{power} define the active AC and DC powers at the equilibrium as
$$
P^\star=v_di_d^\star,\quad P_{dc}^\star=v_C^\star i_{dc}.
$$
Consider then the following equivalences
\begin{equation*}
\begin{aligned}
P^\star P_{dc}= P_{dc}^\star P   \Leftrightarrow v_C^\star i_d = i_d^\star v_C \Leftrightarrow  y_1 =0,
\end{aligned}
\end{equation*}

with $y_1$ the first component of the passive output  \eqref{yy}. Similarly, for the reactive power
\begin{equation*}
\begin{aligned}
Q^\star P_{dc}= P_{dc}^\star Q  \Leftrightarrow v_C^\star i_q = i_q^\star v_C \Leftrightarrow  y_2 =0,
\end{aligned}
\end{equation*}

with $y_2$ the second component of the passive output  \eqref{yy}. In other words, the objective of the PI--PBC to drive the passive output $y$ to zero can be reinterpreted as a power equalization objective
identical to the one used in Akagi's PQ method.

%
%%%%%%%%%%%%%%%%%%%%%%%%
\section{Performance Limitations of Inner--Loop PIs}
\label{zdsection}
%%%%%%%%%%%%%%%%%%%%%%%%
%
Quality assessment of control algorithms is a difficult task---epitomized by the classical performance versus robustness tradeoff, neatly captured by the stability margins in linear designs. The situation for
nonlinear systems, where the notions of (dominant) poles and frequency response are specious, is far more complicated. In any case, it is well--known that the achievable performance in control systems is
limited by the presence of minimum phase zeros \cite{FRAZAM,QIUDAV,SERetal}.

In this section an attempt is made to evaluate the performance limitations of the inner--loop PI controllers discussed in the previous section. Towards this end, the zero dynamics of the VSR system
\eqref{sysiv} for the outputs $y$ \eqref{y}, $y_I$ \eqref{DCC} and $y_V$ \eqref{DOVC} are computed. All three outputs have relative degrees $\{1,1\}$, hence their zero dynamics is of order one but, while it is
exponentially stable for the passive output $y$ it turns out that---for normal operating regimes of the VSR---it is {\em unstable} for $y_I$ and $y_V$. If the {zero dynamics} is {\em unstable} cranking up the
controller gains yields an unstable behavior. This should be contrasted with the passive output $y$ that, as shown in Proposition \ref{pro2} yields an asymptotically stable closed--loop system for all positive
gains.\footnote{This discussion pertains only to the behavior of the adopted mathematical model of the VSR. In practice, other dynamical phenomena and unmodeled effects may trigger instability even for the
PI--PBC.}

To simplify the derivations only the case of $i_q^\star =0$ is considered. This assumption is justified since it corresponds to fix to zero the desired value of the reactive power, which is a common operating
mode of VSRs. Moreover, this is done without loss of generality because it is possible to show---alas, with messier calculations---that the stability of the zero dynamics is the same for the case of $i_q^\star
\neq 0$. This situation may arise when the VSR is associated to an AC grid and not to a renewable energy source. In this section, it is possible to prove that the (first order) zero dynamics associated to \eqref{y}, is ``extremely slow"---with respect to the overall bandwidth of the VSR. Since this zero ``attracts" one of the poles
of the closed--loop system, it stymies the achievement of fast transient responses. This situation motivates the inclusion of an outer--loop controller that generates the references to the inner--loop
PI. This modification is presented in Section \ref{sec6}.
%
%%%%%%%%%%%%%%%%%%%%%%%%%%%
\subsection{Zero dynamics analysis of the passive output $y$}
\label{zerdyn}
Before presenting the main result of this subsection, an important observation is done: the zero dynamics of the VSR model \eqref{sysiv} and of its corresponding incremental version are the same. Indeed,
the zero dynamics describes the behavior of the dynamical system restricted to the set where the output is zero. Since the incremental model dynamics is the {\em same} as the original model dynamics---simply
adding and substracting a constant---their zero dynamics coincide.

\begin{proposition}\em
\label{zeroprop}
Fix $(i^\star _{d},i^\star _{q},v^\star _{C}) \in {\mathcal{E}}$ with $i_q^\star =0$. The zero dynamics\footnote{With some abuse of notation, the zero dynamics is represented using the same symbols of the system
dynamics.} of the VSR \eqref{sysiv} with respect to the output \eqref{yy} is {\em exponentially stable} and is given by
\begin{equation}
\lab{zerdynpipbc}
\dot v_C=-\lambda v_C+\lambda v_C^\star ,\qquad \lambda:=\frac{R(i_d^\star )^2+G(v_C^\star )^2}{L(i_d^\star )^2+C(v_C^\star )^2}.
\end{equation}
%where
%\begin{equation}\label{thezero}
%\lambda:=\frac{R(i_d^\star )^2+G(v_C^\star )^2}{L(i_d^\star )^2+C(v_C^\star )^2}.
%\end{equation}
\end{proposition}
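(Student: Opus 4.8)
The plan is to compute the zero dynamics directly: impose $y\equiv 0$, eliminate the control that makes this invariant, and read off the residual scalar dynamics in $v_C$. First I would exploit the constraint $y\equiv 0$ in the form \eqref{yy}. With $i_q^\star=0$ the second component reads $v_C^\star i_q=0$, so $i_q\equiv 0$ on the zero-dynamics manifold (recall $v_C$, and in particular $v_C^\star$, is bounded away from zero), while the first component gives the algebraic relation $i_d=(i_d^\star/v_C^\star)\,v_C$. Writing $k:=i_d^\star/v_C^\star$, this also gives $\dot i_d=k\dot v_C$.

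Next I would determine the control rendering the output invariantly zero, using that $y$ has relative degree $\{1,1\}$ (as noted before the statement). Imposing $\dot i_q\equiv 0$ in the $i_q$-equation of \eqref{sysiv} with $i_q=0$ yields $u_q=-L\omega i_d/v_C$; imposing $\dot i_d=k\dot v_C$ in the $i_d$-equation with $i_q=0$ yields $u_d=(v_d-Ri_d-L\dot i_d)/v_C$. Both are well defined since $v_C\neq 0$. Substituting $u_d$, $u_q$ and $i_q=0$ into the $v_C$-equation of \eqref{sysiv}, multiplying through by $v_C$, and replacing $i_d=kv_C$, $\dot i_d=k\dot v_C$, one obtains after dividing by $v_C$ the scalar linear affine ODE $(C+Lk^2)\dot v_C=(kv_d-i_{dc})-(Rk^2+G)v_C$. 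Multiplying numerator and denominator of the coefficient of $v_C$ by $(v_C^\star)^2$ identifies it with $\lambda$ as in \eqref{zerdynpipbc}.

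The one genuinely non-routine step is to show the constant term equals $\lambda v_C^\star$. Multiplying it by $(v_C^\star)^2$ turns the numerator into $v_C^\star(v_di_d^\star-i_{dc}v_C^\star)$, and the equilibrium membership $(i_d^\star,0,v_C^\star)\in\mathcal{E}$, i.e. \eqref{equset} evaluated at $i_q^\star=0$, gives precisely $v_di_d^\star-i_{dc}v_C^\star=R(i_d^\star)^2+G(v_C^\star)^2$; hence the constant term collapses to $\lambda v_C^\star$, yielding exactly \eqref{zerdynpipbc}. (As observed just before the statement, it is irrelevant whether one works with \eqref{sysiv} or its incremental version, since they share the same dynamics and hence the same zero dynamics.)

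Finally, exponential stability is immediate: since $R,G,L,C>0$ and $(i_d^\star)^2+(v_C^\star)^2>0$ (with $v_C^\star$ bounded away from zero), we have $\lambda>0$, so $v_C^\star$ is a globally exponentially stable equilibrium of $\dot v_C=-\lambda(v_C-v_C^\star)$. I expect the only mild obstacle to be bookkeeping — correctly eliminating $u_d,u_q$ via the relative-degree-one structure and keeping track of the algebra after multiplying by $v_C$ — together with recognizing that the drift term simplifies precisely because of the PFSSE \eqref{equset}; no deeper difficulty arises.
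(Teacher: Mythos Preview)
Your proposal is correct and follows essentially the same route as the paper: impose $y\equiv 0$ to get $i_q=0$ and $i_d=(i_d^\star/v_C^\star)v_C$, eliminate the control from the coupled $i_d$- and $v_C$-equations, and then invoke the equilibrium relation \eqref{equset} to identify the constant term as $\lambda v_C^\star$. The only cosmetic difference is that the paper eliminates $u_d$ by taking a linear combination of the $i_d$- and $v_C$-equations (multiplying the latter by $v_C^\star/i_d^\star$ and adding), whereas you solve for $u_d$ explicitly and substitute; the two are algebraically identical and lead to the same $(C+Lk^2)\dot v_C=\cdots$ relation.
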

\begin{proof}
By setting the output \eqref{yy} identically to zero and using  the fact that $i_q^\star=0$,  it is easy to get
\begin{equation}\label{yzero}
i_d=\frac{i^\star _d}{v_C^\star }v_C,\qquad i_q=\frac{i^\star _q}{v_C^\star }v_C=0.
\end{equation}
Replacing \eqref{yzero} into \eqref{sysiv} gives
\begin{align}
L\frac{i_d^\star }{v_C^\star }\dot v_C&=-R\frac{i_d^\star }{v_C^\star }v_C-v_C u_1 +v_d,\label{eq1}\\
0&=-L\omega\frac{i_d^\star }{v_C^\star }v_C-v_C u_2,\label{eq2}\\
C\dot v_C&=\frac{i_d^\star }{v_C^\star }v_C u_1 -Gv_C-i_{dc}.\label{eq3}
\end{align}
To eliminate $u_1$ it suffices to multiply \eqref{eq3} by $\frac{v_C^\star }{i_d^\star }$ and add it to  \eqref{eq1}, yielding
$$ 
\left(\frac{Cv_C^\star }{i_d^\star }+ \frac{L i_d^\star }{v_C^\star }\right)\dot v_C=-\left(\frac{Ri_d^\star }{v_C^\star }+ \frac{Gv_C^\star }{i_d^\star }\right)v_C+v_d-\frac{v_C^\star }{i_d^\star }i_{dc}. 
$$
The proof is completed by noting from \eqref{equset} that, for $(i^\star _{d},i^\star _{q},v^\star _{C}) \in {\mathcal{E}}$ with $i_q^\star =0$, it follows that
$$
v_d-\frac{v_C^\star }{i_d^\star }i_{dc}=\frac{R(i_d^\star )^2+G(v_C^\star )^2}{i_d^\star}
$$
and by pulling out the common factor $\frac{1 }{i_d^\star v_C^\star }$.
\end{proof}

\begrem
The parameters $R$ and $G$, that represent the losses in the VSR, are usually small---compared to $L$ and $C$. Consequently, $\lambda$ will also be a small value, placing the pole of the zero dynamics
very close to the origin and inducing slow convergence.
\endrem

\begin{remark}\em
It is interesting to note that the rate of exponential convergence of the zero dynamics can be rewritten as
$$
\lambda=\frac{1}{2}\frac{P^\star-P_{dc}^\star}{\mathcal{H}(i^\star _{d},i^\star _{q},v^\star _{C})},
$$

that is half the ratio between the steady--state dissipated power and the steady--state energy of the system. This relationship holds true also for the case $i_q^\star \neq 0$.
\end{remark}
%
%%%%%%%%%%%%%%%%%%%%%%%%%%%
\subsection{Zero dynamics analysis of $y_I$}
\label{zerdyn1}
Before analyzing the zero dynamics of the PQ and DC voltage control outputs,  \eqref{DCC} and  \eqref{DOVC}, respectively, it is important to recall that their references  do not necessarily  belong to the assignable
equilibrium set. However, the reasonable assumption that the zero dynamics admits an {\em equilibrium} for the chosen reference values can be done. If this is not the case the zero dynamics is unstable.
Moreover, similarly to the case of the passive output, it is assumed that $i_q^{\mathrm{ref}}=0$.

\begin{proposition}\em
\label{zeroprop1}
Fix $i_d^{\mathrm{ref}} \in \rea$, $i_q^{\mathrm{ref}}=0$. The zero dynamics of the VSR \eqref{sysiv} with respect to the output \eqref{DCC} is given by
\begin{equation}
\lab{zerdynDCC}
C \dot v_C=-G v_C+\frac{\alpha_I}{v_C} - i_{dc}^{\mathrm{ref}},\qquad \alpha_I:=v_di_d^{\mathrm{ref}}-R(i_d^{\mathrm{ref}})^2
\end{equation}
%where we defined the constant $$
%\alpha_I:=v_di_d^{\mathrm{ref}}-R(i_d^{\mathrm{ref}})^2
%,$$
where $i_{dc}^{\mathrm{ref}}$ is a constant value for $i_{dc}$ satisfying
\begequ
\lab{equcon}
(i_{dc}^{\mathrm{ref}})^2> 4 G \alpha_I.
\endequ
\begite
\item[-] If $\alpha_I>0$ the zero dynamics has one equilibrium and it is {\em stable}.
\item[-] If $\alpha_I<0$ the zero dynamics has two equilibria one stable and one {unstable}.
\item[-] If $\alpha_I=0$ the zero dynamics is a linear asymptotically {\em stable} system.
\endite
\end{proposition}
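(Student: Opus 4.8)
The plan is to obtain \eqref{zerdynDCC} by the same direct--substitution procedure used for the passive output in Proposition~\ref{zeroprop}, and then to read off the three cases by treating the resulting scalar equation as a one--dimensional autonomous system on the physical domain $v_C>0$ (recall $v_C$ is bounded away from zero).

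Imposing $y_I\equiv 0$ in \eqref{DCC} forces $i_d\equiv i_d^{\mathrm{ref}}$ and $i_q\equiv i_q^{\mathrm{ref}}=0$, both constant, so $\dot i_d=\dot i_q=0$. The first two equations of \eqref{sysiv} then determine the zero--dynamics duty cycles
\[
u_d=\frac{v_d-R\,i_d^{\mathrm{ref}}}{v_C},\qquad u_q=-\frac{L\omega\,i_d^{\mathrm{ref}}}{v_C},
\]
which are well defined since $v_C>0$. Substituting $u_d$ into the third equation of \eqref{sysiv}, and noting that the term $i_qu_q$ vanishes because $i_q^{\mathrm{ref}}=0$, gives $C\dot v_C=-Gv_C+\alpha_I/v_C-i_{dc}^{\mathrm{ref}}$ with $\alpha_I=v_di_d^{\mathrm{ref}}-R(i_d^{\mathrm{ref}})^2$, that is, exactly \eqref{zerdynDCC}.

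For the qualitative statements I would study $f(v_C):=\alpha_I/v_C-Gv_C-i_{dc}^{\mathrm{ref}}$ on $(0,\infty)$. Multiplying $f=0$ by $v_C>0$ shows the equilibria of \eqref{zerdynDCC} are the \emph{positive} roots of $Gv_C^2+i_{dc}^{\mathrm{ref}}v_C-\alpha_I=0$; condition \eqref{equcon}, together with the standing assumption that an equilibrium exists, guarantees the relevant root(s) are real and distinct, while the product of the roots, $-\alpha_I/G$, controls how many are positive. If $\alpha_I>0$ the product is negative, so there is exactly one positive equilibrium; since $f'(v_C)=-\alpha_I/v_C^2-G<0$ on all of $(0,\infty)$, $f$ is strictly decreasing through that root and the equilibrium is asymptotically stable (indeed globally on $v_C>0$). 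If $\alpha_I=0$ the equation reduces to the linear Hurwitz system $C\dot v_C=-Gv_C-i_{dc}^{\mathrm{ref}}$. If $\alpha_I<0$ the product of the roots is positive, so both are positive; here $f$ increases on $(0,\sqrt{-\alpha_I/G})$ and decreases afterwards, so the smaller equilibrium lies on the increasing branch, where $f'>0$, hence is unstable, while the larger one lies on the decreasing branch, where $f'<0$, hence is exponentially stable.

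The computation in the first step is routine; the care is all in the second. I would be careful to (i) work only on $v_C>0$, so that ``number of equilibria'' means number of positive roots and $f$ is smooth, and, in the $\alpha_I<0$ case, so that the unstable equilibrium is seen to separate the basin of the stable one from trajectories decreasing toward the (unphysical) value $v_C=0$; (ii) justify that the sign of $f'$ at an equilibrium determines local asymptotic stability, which is immediate for a scalar autonomous ODE; and (iii) keep the sign bookkeeping consistent between \eqref{equcon}, the discriminant of the quadratic, and the sign of $\alpha_I$. The argument uses only the ideas already present in the proof of Proposition~\ref{zeroprop}.
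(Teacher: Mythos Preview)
Your proposal is correct and follows essentially the same approach as the paper: substitute $i_d\equiv i_d^{\mathrm{ref}}$, $i_q\equiv 0$ into \eqref{sysiv}, eliminate $u_d$ between the first and third equations to obtain \eqref{zerdynDCC}, and then do a phase--line analysis of the resulting scalar ODE on $v_C>0$. The only cosmetic difference is that the paper reads off the number and stability of equilibria from the graphs of the right--hand side (Fig.~\ref{figdcc}), whereas you carry this out algebraically via the product of the roots of $Gv_C^2+i_{dc}^{\mathrm{ref}}v_C-\alpha_I=0$ and the sign of $f'(v_C)=-\alpha_I/v_C^2-G$; these are the same argument.
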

\begin{proof}
Setting the output \eqref{DCC} equal to zero with $i_q^*=0$ and replacing into \eqref{sysiv} gives
\begin{align}
0&=-Ri_d^{\mathrm{ref}} -v_C u_1 +v_d\label{eq1DCC}\\
0&=-L\omega i_d^{\mathrm{ref}} -v_C u_2 \label{eq2DCC}\\
C\dot v_C&=i_d^{\mathrm{ref}}u_1 -Gv_C-i_{dc}^{\mathrm{ref}}\label{eq3DCC},
\end{align}
where the superscript $(\cdot)^{\mathrm{ref}}$ has been added to $i_{dc}$. Replacing $u_1$ obtained from \eqref{eq1DCC} into \eqref{eq3DCC} yields directly \eqref{zerdynDCC}. Condition \eqref{equcon} is then
necessary and sufficient for the existence of a (real) equilibrium of \eqref{zerdynDCC}. If $\alpha_I=0$ the dynamics reduces to $$C \dot v_C=-G v_C- i_{dc}^{\mathrm{ref}}.$$ The proof is completed by recalling that $v_C>0$ and looking at
the plots of  the right hand side of \eqref{zerdynDCC} for $\alpha_I$ positive and negative in Fig. \ref{figdcc}.
\end{proof}
\begin{figure*}[ht]
 \centering
\includegraphics[width=0.8\linewidth]{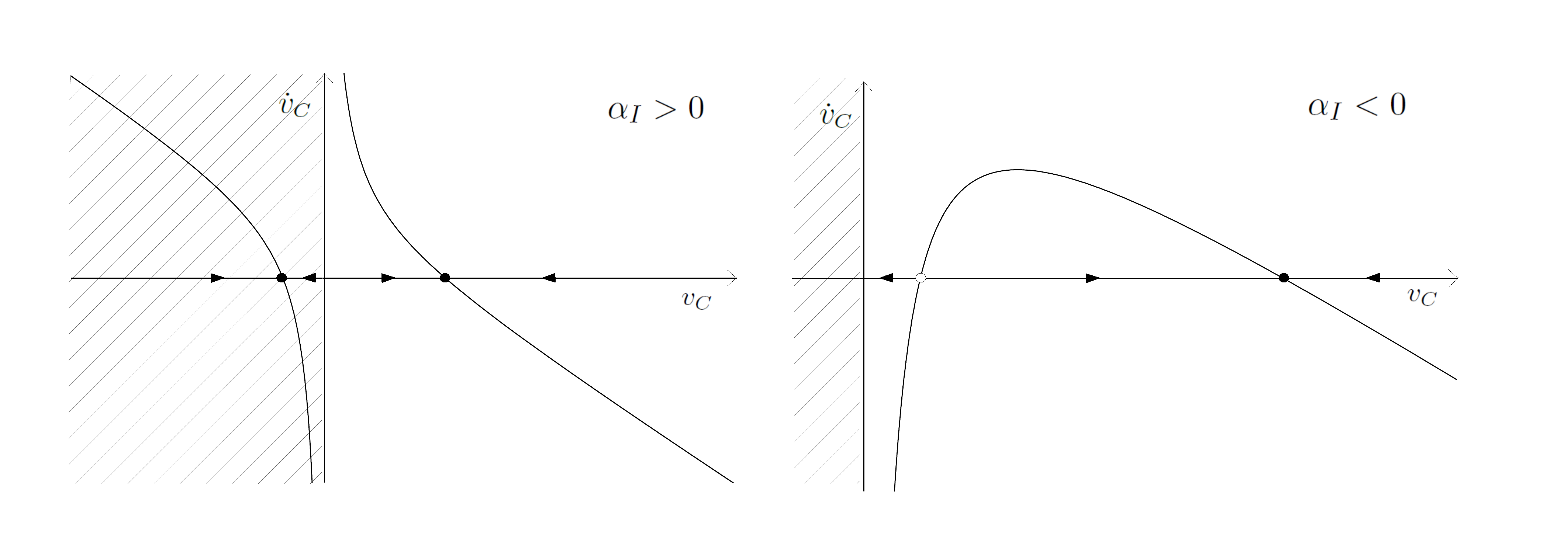}
 \caption{Plot of $\dot v_C$ versus $v_C$ for the cases of (a) $\alpha_I>0$ and (b) $\alpha_I<0$. The arrows in the horizontal axis indicate the direction of the flow of the zero dynamics.}
 \label{figdcc}
\end{figure*}

\begrem
From Fig. \ref{figdcc}, if $\alpha_I<0$, it is easy to see that the stable equilibrium point is the largest one. For standard values of the system parameters it turns out that this equilibrium is located beyond the physical operating regime of the system, hence it is of no practical interest.
\endrem

\begrem
\lab{rem5.6}
The parameters $R$ and $G$ are usually very small and $i_{dc}^\mathrm{ref}$ can take positive or negative values in standard operation. Then condition \eqref{equcon} is always verified while $\alpha_I$ can take positive or negative values.
\endrem

\begrem
\lab{rem5.7}
The situation $\alpha_I=0$, when the zero dynamics is linear and asymptotically stable, is unattainable in applications. Indeed, assuming that in steady--state all signals converge to their reference values, it can be shown that $\alpha_I=0$ if and only if $Gv_C+i_{dc}=0$ that, given the small values of $G$ is not realistic in practice.
\endrem
%
%%%%%%%%%%%%%%%%%%%%%%%%%%%
\subsection{Zero dynamics analysis of $y_V$}
\label{zerdyn2}

\begin{proposition}\em
\label{zeroprop2}
Fix $v_C^{\mathrm{ref}} \in \rea$, $i_q^{\mathrm{ref}}=0$. The zero dynamics of the VSR \eqref{sysiv} with respect to the output \eqref{DOVC} is given by
\begin{equation}
\lab{zerdynDOVC}
L \frac{d i_d}{dt}=-R i_d-\frac{\alpha_V}{i_d} + v_d,\qquad \alpha_V:=i_{dc}^{\mathrm{ref}}v_C^{\mathrm{ref}}+G(v_C^{\mathrm{ref}})^2
\end{equation}
%where we defined the constant
%$$
%\alpha_V:=i_{dc}^{\mathrm{ref}}v_C^{\mathrm{ref}}+G(v_C^{\mathrm{ref}})^2
%$$
where $i_{dc}^{\mathrm{ref}}$ is a constant value for $i_{dc}$ satisfying
\begequ
\lab{equcon1}
v_d^2>-4 R \alpha_V.
\endequ
\begite
\item[-] If $\alpha_V<0$ the zero dynamics has two equilibria and they are both {\em stable}.
\item[-] If $\alpha_V>0$ the zero dynamics has two equilibria one stable and one {unstable}.
\item[-] If $\alpha_V=0$ the zero dynamics is a linear asymptotically {\em stable} system.
\endite
\end{proposition}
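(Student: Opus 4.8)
The argument mirrors that of Proposition~\ref{zeroprop1}, with the roles of $v_C$ and $i_d$ essentially interchanged. First I would impose $y_V\equiv 0$ in \eqref{DOVC}: together with the standing assumption $i_q^{\mathrm{ref}}=0$ this pins the output variables to the constants $v_C\equiv v_C^{\mathrm{ref}}$ and $i_q\equiv 0$, hence $\dot v_C=0$ and $\dot i_q=0$. Substituting these into the VSR model \eqref{sysiv} and treating $i_{dc}$ as the constant $i_{dc}^{\mathrm{ref}}$, the third equation of \eqref{sysiv} degenerates into an algebraic relation from which the duty cycle can be solved, $u_d=(Gv_C^{\mathrm{ref}}+i_{dc}^{\mathrm{ref}})/i_d$ (legitimate on the set where an equilibrium of the zero dynamics exists, where $i_d\neq 0$), while the second equation merely fixes $u_q$ and is otherwise inert. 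Plugging this $u_d$ into the first equation of \eqref{sysiv} and collecting terms, with $\alpha_V:=i_{dc}^{\mathrm{ref}}v_C^{\mathrm{ref}}+G(v_C^{\mathrm{ref}})^2$, produces exactly the scalar dynamics \eqref{zerdynDOVC}; since the incremental model has the same vector field as \eqref{sysiv}, this is also its zero dynamics.

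For the classification I would set $\dot i_d=0$ in \eqref{zerdynDOVC} and multiply by $i_d$, obtaining the quadratic $R i_d^2-v_d i_d+\alpha_V=0$ with discriminant $v_d^2-4R\alpha_V$; a real equilibrium therefore exists precisely under condition \eqref{equcon1}, and when $\alpha_V=0$ the dynamics collapses to the linear, asymptotically stable $L\dot i_d=-R i_d+v_d$. The number of roots, their signs, and their stability then follow by inspecting the right-hand side of \eqref{zerdynDOVC} and its slope $\frac{1}{L}\!\left(-R+\alpha_V/i_d^2\right)$: for $\alpha_V<0$ this slope is negative at every admissible $i_d$, so the two roots — which have opposite signs, their product being $\alpha_V/R<0$ — are both stable; for $\alpha_V>0$ the product $\alpha_V/R>0$ and the sum $v_d/R>0$ force two positive roots straddling $\sqrt{\alpha_V/R}$, across which the slope changes sign, yielding one stable and one unstable equilibrium. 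A phase-line sketch of $\dot i_d$ versus $i_d$, entirely analogous to Fig.~\ref{figdcc}, makes these three cases transparent.

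The algebra leading to \eqref{zerdynDOVC} is routine; the only delicate point is the bookkeeping in the case split on $\alpha_V$ — establishing how many times the branch $\alpha_V/i_d$ meets the affine term $-R i_d+v_d$ on the relevant portion of the $i_d$-axis and reading off stability from the sign of the vector field's slope there, exactly as was done graphically for $y_I$. I expect that elementary case analysis, rather than any substantive difficulty, to be the main obstacle.
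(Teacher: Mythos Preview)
Your approach is correct and is precisely the paper's: set $y_V\equiv 0$, solve the third equation of \eqref{sysiv} for $u_d$, substitute into the first to obtain \eqref{zerdynDOVC}, and then read off the equilibrium structure from a phase-line argument analogous to Fig.~\ref{figdcc}. Your slope computation in fact supplies more detail than the paper (which simply defers to Proposition~\ref{zeroprop1}); note, however, that the discriminant you correctly derive, $v_d^2-4R\alpha_V$, does not literally coincide with the stated inequality \eqref{equcon1}, which appears to carry a sign typo in the paper.
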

\begin{proof}
Setting the output \eqref{DOVC} equal to zero with $i_q^*=0$ and replacing into \eqref{sysiv} gives
\begin{align}
L \frac{di_d}{dt}&=-Ri_d -v_C^{\mathrm{ref}} u_1 +v_d,\label{eq1DOVC}\\
0&=-L\omega i_d -v_C^{\mathrm{ref}} u_2 ,\label{eq2DOVC}\\
0&=i_d u_1 -Gv_C^{\mathrm{ref}}-i_{dc}\label{eq3DOVC}.
\end{align}
Replacing $u_1$ obtained from \eqref{eq3DOVC} into \eqref{eq1DOVC} yields directly \eqref{zerdynDOVC}. Condition \eqref{equcon1} is necessary and sufficient for the existence of a (real) equilibrium of
\eqref{zerdynDOVC}. The proof is completed invoking the same arguments used in the proof of Proposition \ref{zeroprop1} and are omitted for brevity.
\end{proof}

\begrem
Remarks \ref{rem5.6} and \ref{rem5.7} apply \textit{verbatim} to \eqref{zerdynDOVC} and $\alpha_V$ of Proposition \ref{zeroprop2}.
\endrem

%%%%%%%%%%%%%%%%%%%%%%%%%%%
\subsection{Simulated evidence of the performance limitations}
\label{simuinner}
Although Proposition \ref{zeroprop} proves that the zero dynamics for the passive output $y$ is exponentially stable, it turns out that, for the components used in standard HVDC transmission system, the convergence rate is $\lambda \approx 0.04$, which is extremely slow. As indicated above this dominating dynamics stymies the achievement of fast transient responses---a situation that is shown in the following simulations. Also, simulated evidence of the unstable behavior of the PI inner--loops using the outputs  \eqref{DCC} and \eqref{DOVC} is presented.\\
A three--terminals HVDC transmission system with a simple \textit{meshed} topology is considered, as illustrated in Fig. \ref{tops}, where the corresponding graph is also given. The model of the system is given by \eqref{overall}, that is a system of dimension $3n+\ell=11$ with $2n=6$ inputs. Parameters of the VSRs and of the transmission lines are given in Table 1.%\ref{parameters1}. 

\begin{figure}[h!]
 \centering
\includegraphics[width=1\linewidth]{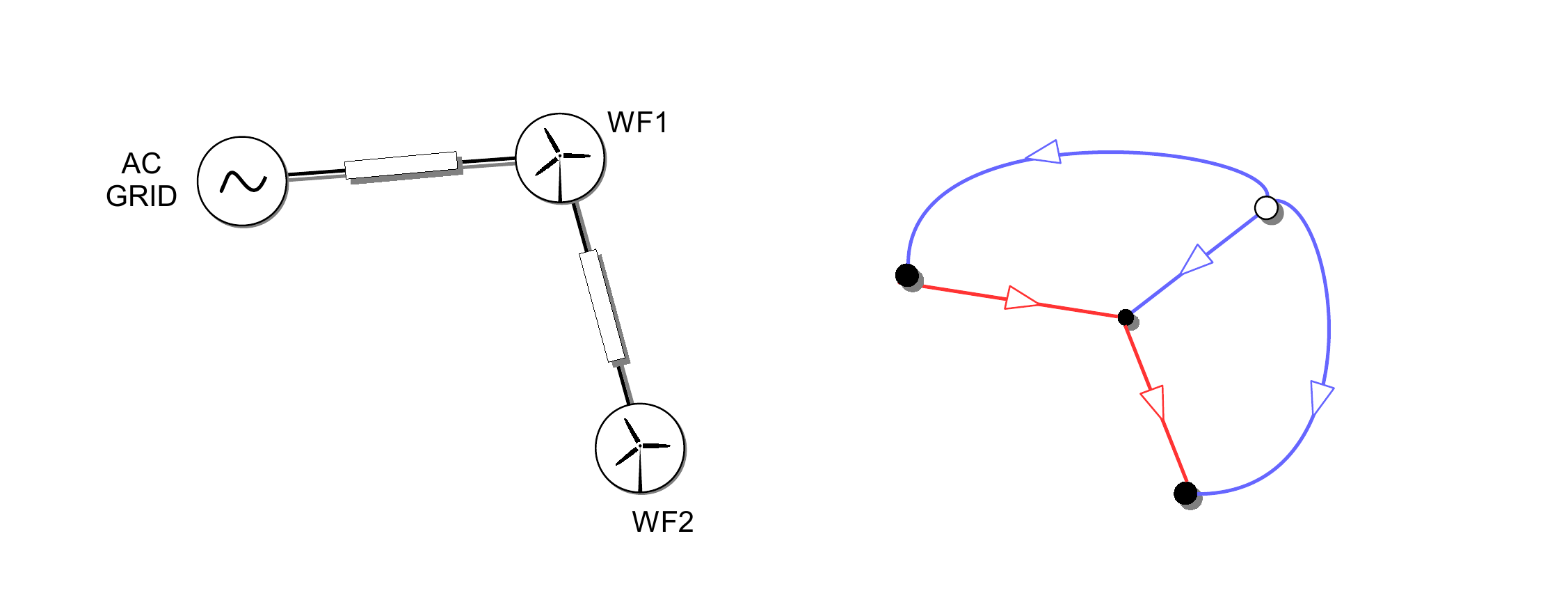}
 % lineModel_pic.pdf: 293x71 pixel, 72dpi, 10.34x2.50 cm, bb=0 0 293 71
 \caption{{Schematic representation of an HVDC transmission system constituted by three stations, associated to two wind farms (WFs) and an AC grid, with associated graph. The graph is represented by filled circles for the \textit{VSRs-buses} and the unfilled circle for the \textit{ground} node. Blue and red edges characterize VSRs and lines, respectively.}}
 \label{tops}
\end{figure}

Consider then the following {\em control objectives}: all the stations are required to regulate the reactive power to zero; the stations associated to the wind farms (WF1, WF2) are required to regulate the active power to desired (constant) values; the remaining station, called \textit{slack bus} (SB), must regulate the voltage around its nominal value. In Table 2, the corresponding references of direct current and DC voltages are furnished, together with the corresponding assignable equilibria, that are calculated via the PFSSE defined by \eqref{prop}. Changes in references occur every $T$ $s$ over a time interval of $5T$ $s$. It should be noticed that from $0$ to $2T$ the power flow is uniquely directed from both wind farms stations to the AC grid, while at $2T$  and next $3T$ the wind farms stations start demanding power to the AC grid, thus reversing the direction of the power flow. This situation can arise when the power produced by the wind farms is insufficient to supply local loads.
\begin{table}
\caption{System parameters.}
\label{parameters1}       % Give a unique label
%
% Follow this input for your own table layout
%
\centering
\begin{tabular}{p{1cm}p{1.35cm}p{1cm}p{1.35cm}}
\hline
  & Value &   & Value \\
\hline
 $R_{r,i}$               & $0.01$ $\Omega $ & $G_{r,i}$  & $0$ $\Omega^{-1} $ \\
         $L_{r,i}$               & $40$ $mH$  & $C_{r,i}$ & $20$ $\mu F$ \\
         $V_i$                   & $130$ $kV$ &  $\omega_i$  & $50$ $Hz$ \\
         $R_{\ell,12}$           & $26$ $\Omega$ & $L_{\ell,12}$&$3.76$ $mH$\\
         $R_{\ell,23}$           & $20$ $\Omega$ & $L_{\ell,23}$&$2.54$ $mH$\\
\hline 
\end{tabular}
\end{table}
\begin{table}
\caption{System references.}
\label{desired}       % Give a unique label
%
% Follow this input for your own table layout
%
\begin{tabular}{p{0.3cm}p{0.8cm}p{0.8cm}p{0.8cm}p{0.7cm}p{1cm}p{1cm}}
\hline\noalign{\smallskip}
          & $\;\;\quad SB$    &$\;\;{WF}_1$    &$\;\;{WF}_2$  &$ \;SB$ &$\;\;\;{WF}_1$ &$\;\;\;{WF}_2$\\
\hline
        $0$   &   $-1260$  & $\;\;\;\; 900$ &  $\;\; 1000$ & $100$ & $142.595$  & $158.951$ \\
        $T$ &    $-1588$ & $\;\;\;\;900$  & $\;\; 1800$ &  $100$& $153.650$  & $179.691$  \\
        $2T$&    $\;\;\;-266$ & $\;\;\;\;500$  & $-200$ & $100$ &$109.004$  & $104.004$  \\
        $3T$&    $\;\;\;\;\;\;\;\; 905$ & $-400$  & $-200$ & $100$ &$\;\;\; 69.419$  & $\;\;\;60.877$  \\
        $4T$&    $\;\;\;-849$ & $\;1300$  & $-200$ & $100$ &$128.708$  & $124.532$  \\
\hline
\end{tabular}
\end{table}

%%%%%%%%%%%%%%%%%%%%%%%%%%%
\subsubsection{PI--PBC}
\label{simpipbc}
In this subsection the simulations on the three--terminals benchmark example of the decentralized PI--PBC defined in Subsection \ref{PBCcontrol} are presented, illustrating the stability properties and performance limitations previously discussed. Setting $T=2000$ $s$ the controllers \eqref{control} are designed with identical parameters and diagonal matrices $k_{P,i}=\mathrm{diag}\{1,1\}$, $k_{I,i}=\mathrm{diag}\{10,10\}$. The behavior of the VSRs are depicted in Fig. \ref{pbcfig}.
\begin{figure*}[ht]
 \centering
 \includegraphics[width=0.8\linewidth]{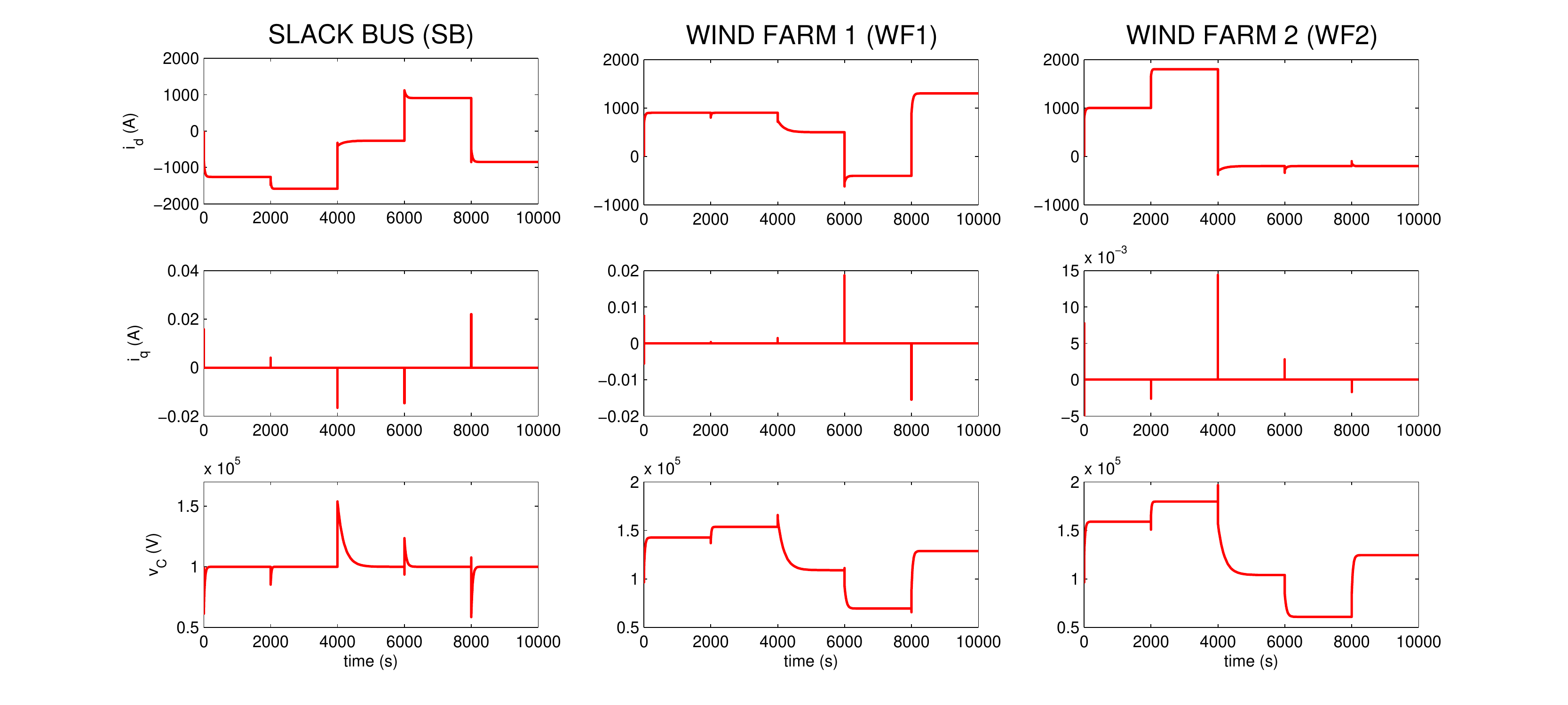}
 % lineModel_pic.pdf: 293x71 pixel, 72dpi, 10.34x2.50 cm, bb=0 0 293 71
 \caption{{Responses of VSRs variables under the decentralized PI--PBC.}}
 \label{pbcfig}
\end{figure*}

As expected, the direct currents of each station attain the assignable equilibria defined in Table 2, while the quadrature currents are always kept to zero after a very short transient. Moreover, the DC voltage at the slack bus is maintained near the nominal value of $100$ $kV$, as required, while the DC voltage variation at the wind farms stations, balances the fluctuation of power demand. Even though the desired steady--state is attained for all practical purposes, the convergence time of direct currents and DC voltages is extremely slow. This poor transient performance behavior is independent of the controller gains. Indeed, extensive simulations show that the system maintains the same slow convergence time even with larger gains, thus validating the performance limitations analysis realized in subsection \ref{zerdyn}.
%

%%%%%%%%%%%%%%%%%%%%%%%%%%%
\subsubsection{PQ and DC voltage controllers}

\begin{figure*}[ht]
 \centering
 \includegraphics[width=0.8\linewidth]{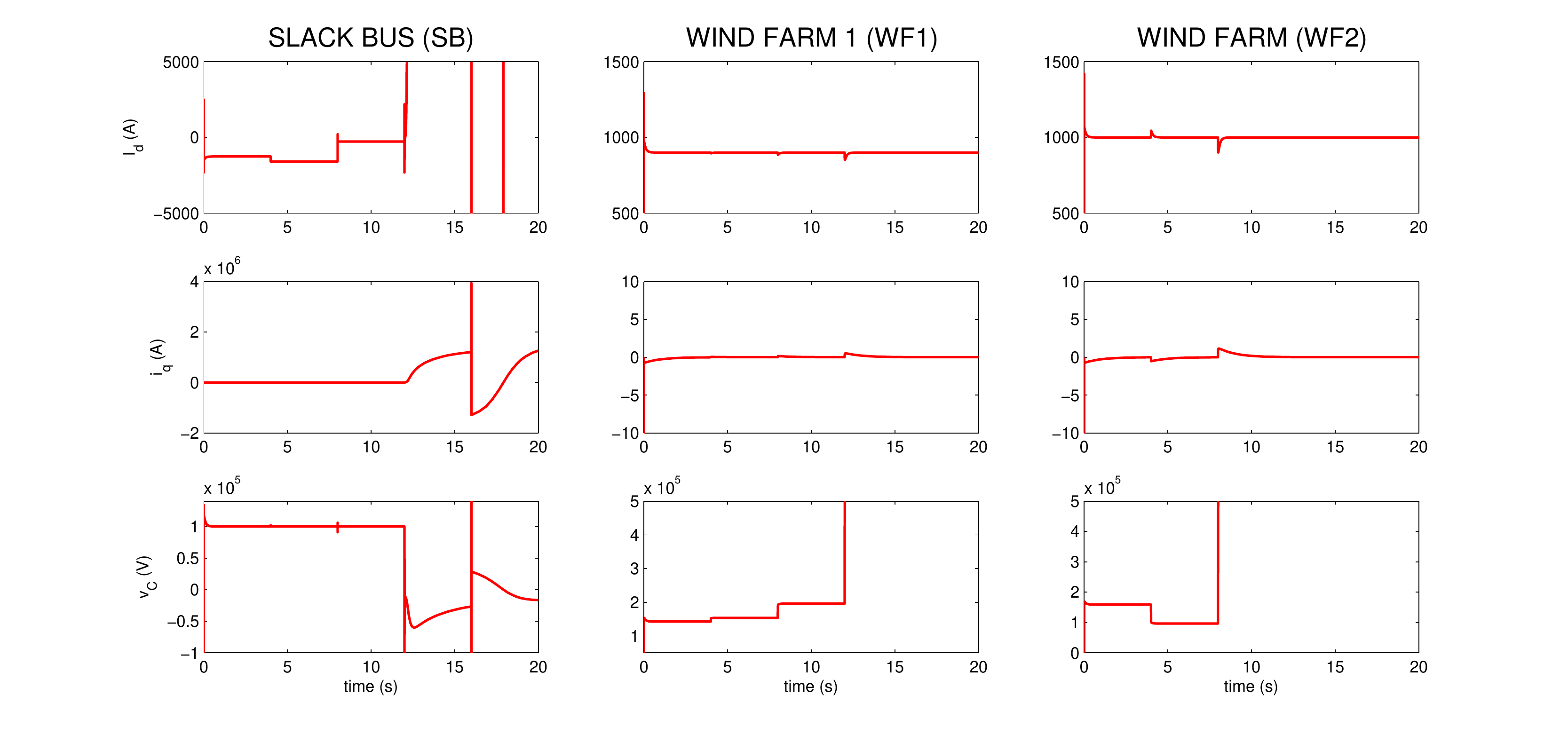}
 % lineModel_pic.pdf: 293x71 pixel, 72dpi, 10.34x2.50 cm, bb=0 0 293 71
 \caption{{Responses of VSRs variables under the decentralized PQ and DC voltage controllers.}}
 \label{PIfig}
\end{figure*}

The behavior of the system under the standard PQ and DC voltage controllers of Subsection \ref{liter} is next analyzed. In agreement with the control requirements described above, two PQ controllers are designed to regulate direct and quadrature currents of the wind farm  stations  and one DC voltage controller is designed to regulate DC voltage and quadrature current of the slack bus. Simple PI controllers defined over the outputs \eqref{DCC}, \eqref{DOVC} are considered, designed with identical gains $k_{P,i}$, $k_{I,i}$ tuned via simulations. The behavior of the VSRs are depicted in Fig. \ref{PIfig}, with $T=4$ $s$. This value should be contrasted with the value ($T=2000$ $s$) used for the PI--PBC. It is easy to see that the PQ and DC voltage controllers correctly (and rapidly) regulate the station at the desired references between $0$ and $8$ $s$. This good behavior is not surprising, because PQ controllers applied to VSRs that are injecting power and a DC voltage controller applied to VSRs that is absorbing power, have associated globally asymptotically stable zero dynamics, as proved in Subsections \ref{zerdyn1}, \ref{zerdyn2}. On the other hand, as shown in the figures, when at stations $WF1$ and $WF2$ the power flow is reversed (respectively at $t=12$ $s$ and $t=8$ $s$), the correspondent DC voltages go unstable, because in these cases the zero dynamics is unstable. Similar unstable behavior appears also at the slack bus station.

%%%%%%%%%%%%%%%%%%%%%%%%%%%%%%%%%%%
\section{Adding an outer--loop to the PI--PBC}
\lab{sec6}
%%%%%%%%%%%%%%%%%%%%%%%%%%%%%%%%%%%%%
%
To overcome the transient performance limitations of the PI--PBC exhibited in Subsection \ref{simpipbc}, in this section it is proposed to add an outer--loop that takes as input some desired references---indicated with $(\cdot)^{\mathrm{ref}}$---and generates as output the references to the inner--loop scheme---see Fig. \ref{droopfg}. The latter will replace the desired equilibria in the definition of the passive output \eqref{yy}, associated to each VSR. Because in this paper the discussion is restricted to the performances of the PI--PBC in nominal operating conditions, the following assumption is made.
\smallbreak

\begin{assumption}\label{assref} The input references $(\cdot)^{\mathrm{ref}}$ of the outer--loop control belong to the set of assignable equilibria $\mathcal{E}$.
\end{assumption}
\smallbreak

In common practice, the references taken as input by the outer--loop control may not belong to the assignable set, thus posing the more general problem of designing an outer--loop that ensures convergence to an (a priori unknown) equilibrium point, \textit{i.e.} the so-called \textit{primary control} problem \cite{berteen,dorfler,sandberg}. However, in the present case the attention is exclusively focused on the aspect of overcoming of performance limitations and  a robustness analysis is left for future investigation.

%%%%%%%%%%%%%%%%%%%%%%%%%%%
\subsection{New output with droop control}
A commonly used outer--loop control is the so--called {\em conventional droop control}, which replaces---at the $i$--th VSR---the direct current $i_{d,i}^\star $ with its desired reference $i_{d,i}^{\mathrm{ref}}$ plus a deviation (droop) term proportional to the voltage error, leaving some constant references for  $i_{q,i}^\star $ and  $v_{C,i}^\star $. More precisely, the following assignments are made in  \eqref{yy}
\begin{equation}
\label{droop}
i_{d,i}^\star  \leftarrow i_{d,i}^{\mathrm{\mathrm{ref}}}+k_{d,i}(v_{C,i}^{\mathrm{ref}}-v_{C,i}),\quad i_{q,i}^\star  \leftarrow i_{q,i}^{\mathrm{ref}},\quad v_{C,i}^\star \leftarrow v_{C,i}^{\mathrm{ref}},
\end{equation}
where $k_{d,i}>0$ is called the droop coefficient. 
% The latter should be intended as the problem of ensuring convergence of the system -- independently from the assigned references -- to \textit{any} equilibrium point embedded with some specific properties \cite{sandberg, dorfler}.
 Replacing \eqref{droop} in the passive output \eqref{yy} yields the new output
\begin{equation}
\label{extendeddroop}
y_{N,i}:=\lef[{c} v_{C,i}^{\mathrm{ref}}i_{d,i}-i_{d,i}^{\mathrm{ref}}v_{C,i}-k_{d,i}(v_{C,i}^{\mathrm{ref}}-v_{C,i})v_{C,i}\\ v_{C,i}^{\mathrm{ref}}i_{q,i}-i_{q,i}^{\mathrm{ref}}v_{C,i}\rig].
\end{equation}

\begin{figure*}[ht]
 \centering
 \includegraphics[width=0.8\linewidth]{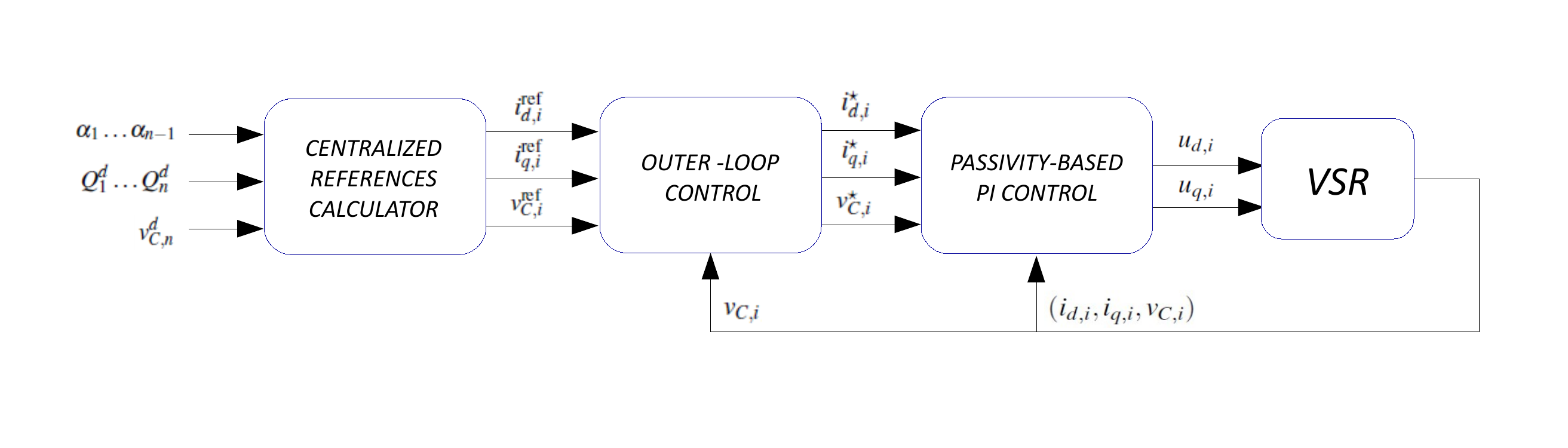}
 % lineModel_pic.pdf: 293x71 pixel, 72dpi, 10.34x2.50 cm, bb=0 0 293 71
 \caption{{Control architecture of HVDC transmission systems under nominal operating conditions.}}
 \label{droopfg}
\end{figure*}
The simulations show that the performance of the modified PI--PBC, that is, adding a PI around the new outputs \eqref{extendeddroop}, is significantly better than the original PI--PBC and that, under Assumption \ref{assref}, the same closed--loop equilibrium point is preserved. 

%\begin{remark}\label{refass} Because in this paper the discussion is restricted to the performances of the PI--PBC, the assumption of selecting the outer--loop control references from the set of assignable equilibria is done. However, in general these references may not belong to the assignable set, thus posing the more general problem of \textit{primary} control \cite{berteen,dorfler,sandberg}. The latter should be intended as the problem of ensuring convergence of the system -- independently from the assigned references -- to \textit{any} equilibrium point embedded with some specific properties \cite{sandberg, dorfler}. 
%\end{remark}

\begin{remark}\em
In several papers, {\em e.g.},  \cite{sandberg,bencha}, the assumption that the inner--loop PI and the VSRs are much faster than the network dynamics is made. Under this assumption, a globally asymptotically stabilized VSR can be equivalently modeled as the parallel interconnection of two AC current sources connected to the network through a voltage bus capacitor, that is assumed to operate at the same time scale of the network. The resulting reduced model is linear and conditions for {\em local} asymptotic stability can be easily established \cite{sandberg,dorfler}.
\end{remark}
\begrem
In contrast to \cite{sandberg,sun}, the droop control laws are defined with respect to DC voltages that do not necessarily all coincide with the same nominal value $v_C^{nom}$. Then, the trade--off between having the voltages converge to the nominal voltage, and satisfying a pre--determined active power distribution (\textit{sharing}) between the VSRs, is completely captured by the power flow steady-state equations that determine the set of assignable equilibria. The reader is referred to Section \ref{sec7} for further details on this problem.
\endrem

\subsection{A GAS outer--loop controller}
It is evident from \eqref{extendeddroop}, that the inclusion of additional state--dependent terms to the first component of $y_{N,i}$ invalidates the stability result obtained in Proposition \ref{pro2}, as the new output $y_{N,i}$ is {\em not passive}. Moving from these considerations, a modified version of the PI--PBC \eqref{control} is proposed. The latter, if properly designed, allows to overcome the performance  limitations of the passive output, while preserving global asymptotic stability of the closed--loop system. This modification consists of an additional linear feedback that affects only the proportional part of the PI--PBC.  The following  proposition is then presented.\\
\begin{proposition}\em
\lab{prolast}
Consider the HVDC transmission system \eqref{overall}, with a desired steady--state $x^\star  \in\mathcal{E}$, in closed--loop with the PI control
\begin{equation}\label{control2}
\begin{aligned}
u & = -K_P y - K_I\zeta-K_L Q\tilde x,\quad \dot \zeta = y,
\end{aligned}
\end{equation}
with $y$ given in \eqref{y}, gain matrices $K_P, K_I$ as in \eqref{gains} and \mbox{$K_L\in\mathbb{R}^{2n\times(3n+\ell)}$} verifying
\begin{equation}\label{condlin}
\mathcal{R}_0:=\mathcal{R}+g(x^\star) K_Pg^\top(x^\star)+\frac{1}{2}\left[g(x^\star)K_L+K_L^\top g^\top(x^\star)\right]>0.
\end{equation}
 Then, the equilibrium point $(x^\star ,K_I^{-1}u^\star )$ is globally asymptotically stable (GAS).
\end{proposition}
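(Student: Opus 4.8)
The plan is to reproduce the Lyapunov argument of Proposition \ref{pro2} almost verbatim, treating the new feedback $-K_LQ\tilde x$ as a perturbation that merely replaces the dissipation matrix $\mathcal{R}$ by $\mathcal{R}_0$. As there, I would work in the incremental model \eqref{incremental}, introduce the integrator error $\tilde\zeta:=\zeta-K_I^{-1}u^\star$, and keep the same Lyapunov candidate $W(\tilde x,\tilde\zeta)=\mathcal{H}_d(\tilde x)+\frac12\tilde\zeta^\top K_I\tilde\zeta$, which is positive definite and radially unbounded because $Q>0$ and $K_I>0$. In these coordinates the incremental input is $\tilde u=-K_Py-K_I\tilde\zeta-K_LQ\tilde x$, i.e.\ exactly the input of Proposition \ref{pro2} with the extra summand $-K_LQ\tilde x$.

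Next I would differentiate $W$ along the closed loop. The dissipation equality $\dot{\mathcal{H}}_d=-\tilde x^\top Q\mathcal{R}Q\tilde x+y^\top\tilde u$ and the representation $y=g^\top(x^\star)Q\tilde x$, both established in the proof of the passivity proposition, rely only on the skew-symmetry of $\mathcal{J}_0,\mathcal{J}_{Rd,i},\mathcal{J}_{Rq,i}$ and are untouched by the new feedback. Adding $\tilde\zeta^\top K_I\dot{\tilde\zeta}=\tilde\zeta^\top K_Iy$ and reusing the cancellation of the integrator cross-terms $\mp y^\top K_I\tilde\zeta$ exactly as in Proposition \ref{pro2}, I expect
$$\dot W=-\tilde x^\top Q\mathcal{R}Q\tilde x-y^\top K_Py-y^\top K_LQ\tilde x.$$
Substituting $y=g^\top(x^\star)Q\tilde x$ turns the last two terms into $\tilde x^\top Q\bigl[g(x^\star)K_Pg^\top(x^\star)+g(x^\star)K_L\bigr]Q\tilde x$; the $K_P$-block is already symmetric, while the $K_L$-block must be symmetrized by hand, producing $\frac12\bigl[g(x^\star)K_L+K_L^\top g^\top(x^\star)\bigr]$. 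Collecting terms yields precisely $\dot W=-\tilde x^\top Q\mathcal{R}_0Q\tilde x$ with $\mathcal{R}_0$ as in \eqref{condlin}.

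Hypothesis \eqref{condlin}, $\mathcal{R}_0>0$, together with $Q>0$, then makes $\dot W$ negative definite in $\tilde x$, hence $\dot W\le 0$, giving global stability. Global asymptotic stability follows from LaSalle's invariance principle as in Proposition \ref{pro2}: on the largest invariant subset of $\{\dot W=0\}=\{\tilde x=0\}$ one must also have $\dot{\tilde x}\equiv 0$, which at $\tilde x\equiv 0$ reduces to $g(x^\star)K_I\tilde\zeta=0$; since $K_I$ is nonsingular and $g(x^\star)$ has full column rank --- each of its columns carries a nonzero entry $v_{C,i}^\star$ in a coordinate not shared with the others, and $v_{C,i}^\star\neq 0$ by the physical constraint $v_C>0$ --- this forces $\tilde\zeta=0$, so the invariant set is the single equilibrium $(x^\star,K_I^{-1}u^\star)$. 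The only delicate point in the whole argument is the bookkeeping in the second step: symmetrizing the $K_L$-contribution so that $\mathcal{R}_0$ comes out exactly as in \eqref{condlin}, with no spurious factor of two or dropped transpose; everything else is a direct perturbation of Proposition \ref{pro2}, and inequality \eqref{condlin} is precisely what keeps the sign of $\dot W$ after the extra feedback is added.
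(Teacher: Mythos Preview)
Your proposal is correct and follows essentially the same route as the paper: the same Lyapunov function $W$ from \eqref{weq}, the same dissipation identity $\dot{\mathcal H}_d=-\tilde x^\top Q\mathcal RQ\tilde x+y^\top\tilde u$, the same cancellation of the integrator cross-terms, and the same substitution $y=g^\top(x^\star)Q\tilde x$ to arrive at $\dot W=-\tilde x^\top Q\mathcal R_0 Q\tilde x$. Your explicit symmetrization of the $K_L$-block and your LaSalle argument (using the full column rank of $g(x^\star)$ to force $\tilde\zeta=0$ on the invariant set) are in fact more careful than the paper, which simply writes $\dot W<0$ and stops.
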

\begin{proof}
Using the same Lyapunov function \eqref{weq} employed in the proof of Proposition \ref{pro2}, the derivative along the trajectories of the closed--loop system \eqref{overall}--\eqref{control2} is given by
\begin{equation*}
\begin{aligned}
\dot{{W}}&=-\tilde x^\top Q\mathcal{R}Q\tilde x+{y}^\top\tilde{u}+\tilde\zeta^\top K_Iy\\
&=-\tilde x^\top Q\mathcal{R}Q\tilde x+{y}^\top\tilde{u}-(\tilde u^\top +y^\top K_P+\tilde x^\top QK_L^\top)y  \\
&=-\tilde x^\top Q\mathcal{R}Q\tilde x-\tilde x^\top Qg(x^\star)K_Pg^\top(x^\star)Q\tilde x-\tilde x^\top QK_L^\top g^\top(x^\star)Q\tilde x\\
& = -\tilde x^\top Q\mathcal{R}_0Q\tilde x<0
,
\end{aligned}
\end{equation*}

where in the third equivalence the output definition \mbox{$y=g^\top(x^\star)Q\tilde x$} is used, while the last equivalence follows from condition \eqref{condlin}.
\end{proof}
\smallbreak

Loosely speaking, the Proposition \ref{control2} states that the property of global asymptotic stability of the closed--loop system \eqref{overall}--\eqref{control}---that is the HVDC transmission system controlled via PI--PBC---is preserved for any additional linear feedback that affects only the proportional part of the controller and any gain matrix $K_L$ which verifies condition \eqref{condlin}. However, beside this stability result, Proposition \ref{prolast} does not provide any hint on how to select the controller gains in order to overcome the performance limitations of the PI--PBC, nor how to preserve the decentralization property that -- for some inappropriate choice of the gain matrix -- can be even lost. Taking inspiration from the conventional droop controller discussed in the previous section, the following assignment is made:
$$
K_L:=\begin{bmatrix}
0&0&K_D&0\\
0&0&0&0
\end{bmatrix},
$$
where $K_D:=\mathrm{diag}\{k_{D,i}\}\in\mathbb{R}^{n\times n}$ is a positive matrix to be defined. With this choice it is easy to see that the controller \eqref{control2} can be decomposed in $n$ decentralized controllers of the form
\begin{equation}\label{decpi}\resizebox{1\hsize}{!}{$
\begin{aligned}
\begin{bmatrix}
u_{d,i}\\u_{q,i}
\end{bmatrix}=\begin{bmatrix}
k_{Pd,i}y_{d,i}-k_{Id,i}z_{d,i}-k_{D,i}(v_{C,i}-v_{C,i}^\star)\\
k_{Pq,i}y_{q,i}-k_{Iq,i}z_{q,i}.
\end{bmatrix},\quad \dot{\begin{bmatrix}
z_{d,i}\\
z_{q,i}
\end{bmatrix}}&=\begin{bmatrix}
y_{d,i}\\
y_{q,i}
\end{bmatrix},
\end{aligned}$}
\end{equation}

that correspond to $n$ PI--PBC plus an additional linear feedback in the local DC voltage error. Straightforward calculations---here omitted for brevity---show that it is always possible to determine a gain  matrix $K_D$, such that \eqref{condlin} is verified, thus guaranteeing global asymptotic stability of the closed--loop system.
\begrem
The modified PI--PBC \eqref{decpi} can be interpreted, similarly to the droop controller \eqref{droop}, as an outer--loop  providing references for the the standard PI--PBC, \textit{but only} affecting its proportional part. It is indeed easy to see that it corresponds to assume the following inner--loop control scheme
\begin{align*}\resizebox{1\hsize}{!}{$
\begin{bmatrix}
u_{d,i}\\u_{q,i}
\end{bmatrix} =\begin{bmatrix}
k_{Pd,i}(v_{C,i}^{\star P} i_{d,i}-i_{d,i}^{\star P} v_{C,i})-k_{Id,i}z_{d,i}\\
k_{Pq,i}(v_{C,i}^{\star P} i_{q,i}-i_{q,i}^{\star P} v_{C,i})-k_{Iq,i}z_{q,i}
\end{bmatrix},\;\;
\dot{\begin{bmatrix}
z_{d,i}\\
z_{q,i}
\end{bmatrix}}=\begin{bmatrix}
v_{C,i}^{\star I} i_{d,i}-i_{d,i}^{\star I} v_{C,i}\\
v_{C,i}^{\star I} i_{q,i}-i_{q,i}^{\star I} v_{C,i}
\end{bmatrix},$}
\end{align*}
together with the following outer--loop assignments of the proportional and integral references
\begin{equation}\label{drooppi}
\begin{aligned}
&i_{d,i}^{\star,P}  \leftarrow i_{d,i}^{\mathrm{\mathrm{ref}}}+k_{D,i}\frac{v_{C,i}-v_{C,i}^{\mathrm{ref}}}{v_{C,i}},\quad i_{q,i}^{\star,P}  \leftarrow i_{q,i}^{\mathrm{ref}},\quad v_{C,i}^{\star,P} \leftarrow v_{C,i}^{\mathrm{ref}}\\
&i_{d,i}^{\star,I}  \leftarrow i_{d,i}^{\mathrm{ref}},\quad i_{q,i}^{\star,I}  \leftarrow i_{q,i}^{\mathrm{ref}},\quad v_{C,i}^{\star,I} \leftarrow v_{C,i}^{\mathrm{ref}},
\end{aligned}
\end{equation}
where, as done before, the notation $(\cdot)^{\mathrm{ref}}$ indicates the (assignable) references of the outer--loop.
\endrem

\subsection{Simulations}
\label{simudroop}

To illustrate the previous discussion on outer--loop controllers, the three--terminals benchmark example described in Subsection \ref{simuinner}---controlled via decentralized PI--PBC---is considered. The same control parameters of Subsection \ref{simpipbc} are employed, and the benefits in terms of performance, provided by adding an outer--loop control to the PI--PBC controllers of the form \eqref{drooppi}, are further analyzed. For the choice of the controller gains a very simple heuristic, often invoked in conventional droop control, is selected. Because droop coefficients are supposed to quantify the  additional dissipation injected into the voltage dynamics, and because the rate of convergence of the same depend from the value of the capacitances, define
\begin{equation}\label{Kdroop}
d_i=\frac{G_i+k_{D,i}}{C_i},\qquad i\in[1,n]
\end{equation}
as a measure of the convergence rate of the $i$-th station, with $G_{i}$ and $k_{D,i}$ the conductance and the droop coefficient of the VSR, respectively. A possible choice of droop coefficients is to define a common convergence rate $d$ such that $d_i=d$ for every $i$, that is equivalent to define an uniform convergence rate over the three stations. In the three--terminals benchmark example, because parameters are supposed to be identical at each VSR, the droop coefficients will take identical values, namely $k_{D,i}=5\cdot 10^{-2}$.
\begin{figure*}[ht]
 \centering
 \includegraphics[width=0.8\linewidth]{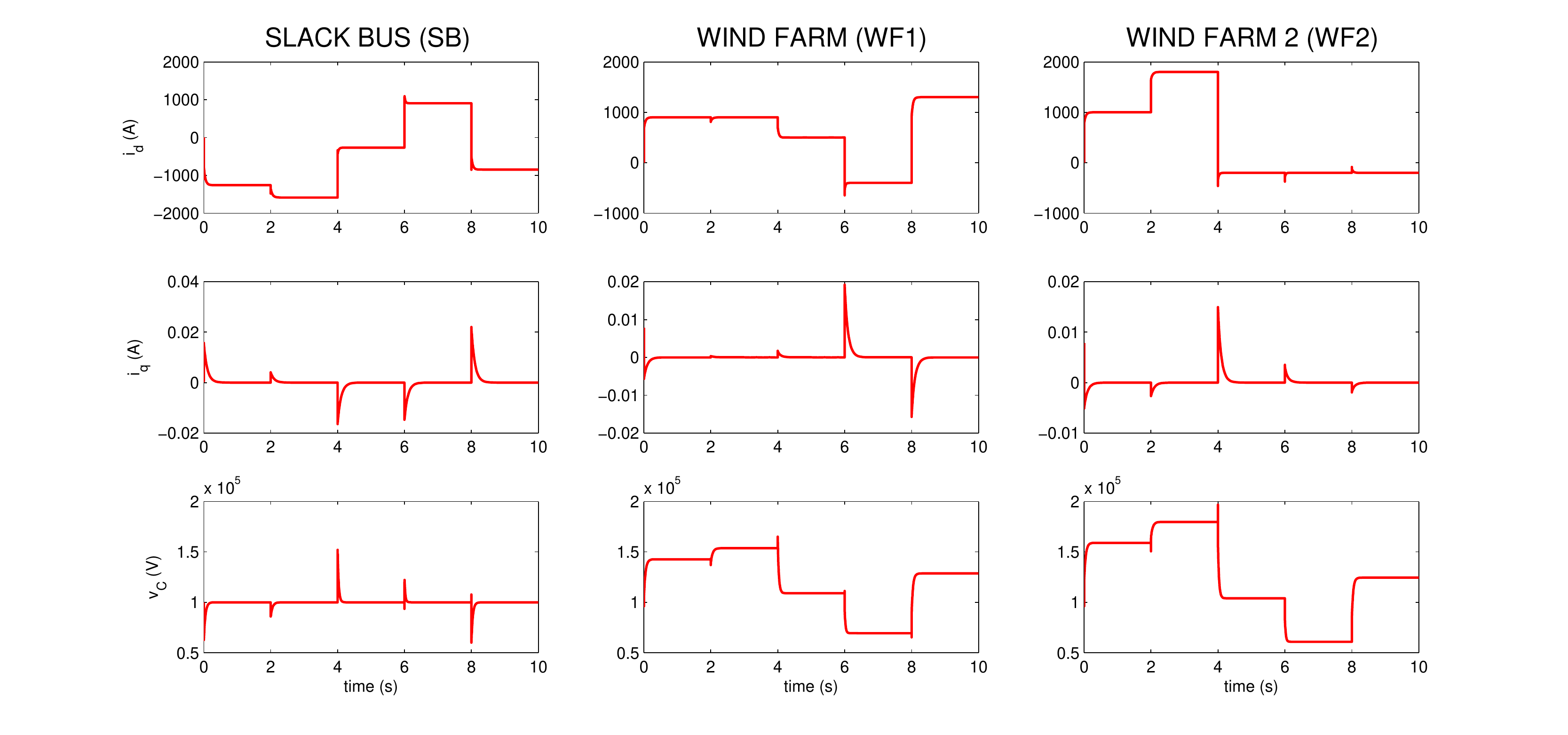}
 % lineModel_pic.pdf: 293x71 pixel, 72dpi, 10.34x2.50 cm, bb=0 0 293 71
 \caption{{Responses of VSRs variables with the decentralized PI--PBC plus GAS outer controller.}}
 \label{droopfig}
\end{figure*}
The behavior of the VSRs are illustrated in Fig. \ref{droopfig}. In contrast to the simulations of the basic PI--PBC of Subsection \ref{simpipbc} when the references change every $T=2000\;s$, now they are a {\em thousand times faster} that is, every $T=2$ $s$. It is easy to see that, compared to Fig. \ref{pbcfig}, the responses maintain the same shape while the convergence occurs with a rate $\approx 10^3$ faster.

\begrem
Under Assumption \ref{assref}, the responses of the VSRs under the PI--PBC plus conventional droop control---here omitted for brevity---are very similar to the responses of the GAS outer controller, depicted in Fig. \ref{droopfig}. However, if Assumption \ref{assref} is not verified, \textit{e.g.} in perturbed operating conditions, significant differences occurs between the VSRs responses. It is indeed possible to verify that, while the conventional droop control ensures convergence to an (assignable) equilibrium point independently from the assigned references, the GAS outer controller may experience instability.
\endrem

%
%%%%%%%%%%%%%%%%%%%%%%%%%%%%%
\section{Centralized References Calculator}
\label{sec7}
%%%%%%%%%%%%%%%%%%%%%%%%%%%%
%
In this section, a reformulation of the problem of choosing appropriate references for the droop or inner PI controllers presented in the previous sections is provided. In power systems literature, this is
often referred as \textit{references calculator}  \cite{berteen}  and it is in general characterized by a centralized architecture ---see Fig. \ref{droopfg}. 

It is next shown how certain simple constrains---widely employed for the references calculation---can be mathematically formalized using the PFSSE. It is worth mentioning that the constraints adopted here, are only special cases of a more complex optimization problem, that in general requires to take into account many other aspects, related to technical and economical issues. However, the following analysis is limited to some of the most relevant technical aspects, leaving a more diverse investigation as a future work.

Consider an HVDC transmission system with \textit{meshed} topology composed by $n$ VSRs (stations), and described by the pH system \eqref{overall}.  The set of assignable references is determined by the following PFSSE
\begin{equation}\label{secondary}
\begin{split}
- R_i(i_{d,i}^{\mathrm{ref}})^2&-R_i(i_{q,i}^{\mathrm{ref}})^2 -G_i(v_{C,i}^{\mathrm{ref}})^2+\\
&+v_{d,i}i^{\mathrm{ref}}_{d,i}-v^{\mathrm{ref}}_{C,i}M_{i}\mathcal{R}_L^{-1}M_{i}^\top v^{\mathrm{ref}}_C=0,
\end{split}
\end{equation}
for $i\in[1,n]$, where the row vector $M_{ i}\in\mathbb{R}^n$ is the $i$--th row of the matrix $M $, that coincide with \eqref{equ2}, but in co-energy variables. The PFSSE consist in $n$ quadratic, coupled equations in $3n$ variables, one for each station. A possibility is then to directly assign $2n$ variables, that correspond to the desired references, while the remaining $n$ variables can be easily determined via \eqref{secondary} and then be provided to the inner--loop controllers. However, the choice of which variables have to be chosen as desired references is not arbitrary, but depends on the control objectives to satisfy. It is next illustrated how the problem of defining references in conformity with some natural control objectives can be formalized using the PFSSE. \\
Assume the following requirements for the $n$ stations:  keep the DC voltage of only one station (called \textit{slack bus}) close to the nominal value, guarantee a proportional active power distribution (\textit{power sharing}) among the stations, regulate the reactive power to a desired value at each station. These requirements can be easily reformulated as constraints over the PFSSE as follows:

\begin{itemize}
\item[-] regulation of the DC voltage of the \textit{slack bus}
\begin{equation*}\label{DCV}
v^{\mathrm{ref}}_{C,n}=v^{\mathrm{d}}_{C,n},
\end{equation*}
where $v_{C,n}^{\mathrm{d}}$ represents the DC voltage nominal value;
\item[-] proportional \textit{power sharing}
\begin{equation*}\label{Ps2}
P_{i}^{\mathrm{ref}}=\alpha _{i} P_n^{\mathrm{ref}}\quad\Rightarrow\quad i^{\mathrm{ref}}_{d,i}=\left[\frac{v_{d,n}}{v_{d,i}}\alpha_{i}\right] i^{\mathrm{ref}}_{d,n},\qquad i\in[1,n-1]
\end{equation*}
where $\alpha_{i}$ is a ratio that determines the proportional active power distribution of the $i$-th station with respect to the \textit{slack bus};
\item[-] regulation of the reactive power
\begin{equation*}\label{react}
Q_i^{\mathrm{ref}}=Q_i^\mathrm{d}\quad\Rightarrow\quad i^{\mathrm{ref}}_{q,i}=\frac{Q^{d}_i}{v_{d,i}},\qquad i\in[1,n]
\end{equation*}
where $Q^\mathrm{d}_i$ represents the exact reactive power required to be injected (or absorbed) by the $i$-th station.
\end{itemize}

It is easy to see that the equations above constitute indeed a set of $2n$ assignments for the PFSSE, that can be consequently solved with respect to the remaining variables.

%%%%%%%%%%%%%%%%%%%%%%%%%%%
\section{Conclusions and Future Perspectives}
\lab{sec8}
%%%%%%%%%%%%%%%%%%%%%%%%
%
The present work covers different aspects of modeling, analysis and control of multi--terminal HVDC transmission systems. The main contribution is a decentralized, globally asymptotically stable, PI control for a very general class of  multi--terminal HVDC transmission systems. For this purpose, starting from a graph description of the network, a pH representation has been obtained, thus revealing the intrinsic passivity properties of the system. The result is a direct extension of the previous works on PI control of VSRs, to a sufficiently general interconnected system, with the important property that the control is decentralized, a fundamental requirement for large--scale systems. To provide some connections between the proposed controller and standard techniques, widely used in literature, a comparative analysis of stability and performances is provided, shedding some light on limitations and benefits of different approaches. In particular it is proved---and validated via simulations---that the popular current and voltage control techniques possibly lead to unstable behaviors of the controlled system, while the proposed PI--PBC, although ensuring convergence, has clear performance limitations. The theoretical analysis that substantiates these claims is based on a detailed, nonlinear zero dynamics analysis of a single VSR with respect to the outputs used for all these controllers. To overcome the performance limitations of the PI--PBC,  an outer--loop controller is added. This outer--loop takes the form of a voltage droop---that is the {\em de facto} standard in the power systems community. Taking inspiration from its usual formulation  an alternative controller that  overcomes the performance limitation of the PI--PBC is then developed, further showing that global asymptotic stability is preserved. 

A future research line pertains to the use of more accurate models for the description of the system, that may improve the control quality. For instance, the behavior of long transmission lines is best described by means of the Telegrapher's equations, thus leading to an infinite dimensional pH representation, which can still be handled with existing theory \cite{jeltsema}. Because the conventional droop controller destroys the passivity property---that is instrumental for the stability analysis of the PI--PBC---current research is  under way to establish some stability properties of the PI--PBC plus conventional droop control. A further possibility is the development of new provably stable outer--loop primary controllers, that ensure convergence to an (a priori unknown) equilibrium point, while the references do not belong to the set of assignable equilibria.  Although the latter approach is of theoretical interest, it is the authors' belief that a rigorous analysis of the droop control---to substantiate its widely acknowledged robustification features---would better contribute to bridge the gap between theory and engineering practice. It would also be of interest to investigate in detail new strategies for the references calculation, moving away from the PFSSE. A viable possibility is to consider the latter as a problem of static optimization, that allows a simple characterization of the control objectives. A final, long term, objective is the experimental validation of the proposed PI--PBC plus droop control scheme.

% if have a single appendix:
%\appendix[Proof of the Zonklar Equations]
% or
%\appendix  % for no appendix heading
% do not use \section anymore after \appendix, only \section*
% is possibly needed

% use appendices with more than one appendix
% then use \section to start each appendix
% you must declare a \section before using any
% \subsection or using \label (\appendices by itself
% starts a section numbered zero.)
%

%\appendices
%\section{Proof of the First Zonklar Equation}
%Appendix one text goes here.
%
%% you can choose not to have a title for an appendix
%% if you want by leaving the argument blank
%\section{}
%Appendix two text goes here.
%

% use section* for acknowledgment
\section*{Acknowledgment}
This work was supported by the Ministry of Education and Science of Russian Federation (Project 14.Z50.31.0031), Alstom Grid and partially supported by the iCODE institute, research project of the Idex Paris-Saclay. The first author would like to thank Johannes Schiffer from TUBerlin for his valuable comments and suggestions to improve Section \ref{sec6}.

% Can use something like this to put references on a page
%% by themselves when using endfloat and the captionsoff option.
%\ifCLASSOPTIONcaptionsoff
%  \newpage
%\fi
%

% trigger a \newpage just before the given reference
% number - used to balance the columns on the last page
% adjust value as needed - may need to be readjusted if
% the document is modified later
%\IEEEtriggeratref{8}
% The "triggered" command can be changed if desired:
%\IEEEtriggercmd{\enlargethispage{-5in}}

% references section

% can use a bibliography generated by BibTeX as a .bbl file
% BibTeX documentation can be easily obtained at:
% http://www.ctan.org/tex-archive/biblio/bibtex/contrib/doc/
% The IEEEtran BibTeX style support page is at:
% http://www.michaelshell.org/tex/ieeetran/bibtex/
%\bibliographystyle{IEEEtran}
% argument is your BibTeX string definitions and bibliography database(s)
%\bibliography{IEEEabrv,../bib/paper}
%
% <OR> manually copy in the resultant .bbl file
% set second argument of \begin to the number of references
% (used to reserve space for the reference number labels box)
\bibliographystyle{plain}        % Include this if you use bibtex 
\bibliography{bibliografia}

\end{document}